\newtheorem{theorem}{Theorem}[section]
\newtheorem{lemma}[theorem]{Lemma}
\newtheorem{definition}[theorem]{Definition}
\newtheorem{example}[theorem]{Example}
\newtheorem{remark}[theorem]{Remark}
\newcommand{\cp}[1]{}
\newenvironment{proof}[1][\!\!\,]{\vspace{1ex}\noindent\textbf{Proof #1: }}{\hfill$\Box$\vspace{2ex}}
\newcommand\psiTerms{\ensuremath{\mathbf{T}}}
\newcommand\psiConditions{\ensuremath{\mathbf{C}}}
\newcommand\psiAssertions{\ensuremath{\mathbf{A}}}
\newcommand\psiChanEq{\ensuremath{\stackrel{\ldotp}{\leftrightarrow}}}
\newcommand\psiCompAssert{\ensuremath{\otimes}}
\newcommand\psiAssertUnit{\ensuremath{\mathbf{1}}}
\newcommand\psiEntailment{\ensuremath{\vdash}}
\newcommand\psiassertion{\ensuremath{\Psi}}
\newcommand\psiEquivAssert{\ensuremath{\simeq}}
\newcommand\psiEmptyProc{\ensuremath{\mathbf{0}}}
\newcommand\psiOutput[2]{\ensuremath{\overline{#1}\langle#2\rangle}}
\newcommand\psiInput[2]{\ensuremath{\underline{#1}\langle#2\rangle}}
\newcommand\psiAssertOp[1]{\ensuremath{(\hspace{-0.07cm}\lvert#1\rvert\hspace{-0.07cm})}}
\newcommand\psiCase[1]{\ensuremath{\mathbf{case}\ #1\,}}
\newcommand\psiNew[1]{\ensuremath{(\nu #1)}}
\newcommand\psiframe[1]{\ensuremath{\mathcal{F}(#1)}}
\newcommand\psiPar{\ensuremath{\,|\,}}
\newcommand\psiRec{\ensuremath{!}}
\newcommand\psifresh{\ensuremath{\#}}
\newcommand\internalAction{\ensuremath{\tau}}
\newcommand\psitransition[4]{\ensuremath{#1\vartriangleright #2 \xrightarrow{#3} #4}}
\newcommand\eventpsi{\textsf{eventPsi}}
\newcommand{\dcrpsi}{\textsf{dcrPsi}}
\newcommand\defequal{\ensuremath{\stackrel{\vartriangle}{=}}}
\newcommand{\eqbydef}{\ensuremath{\stackrel{def}{=}}}
\newcommand{\condRelEv}{\ensuremath{\leq}}
\newcommand{\conflictRelEv}{\ensuremath{\sharp}}
\newcommand{\precondset}[1]{\ensuremath{\condRelEv\!\!#1}}
\newcommand{\preconfset}[1]{\ensuremath{\conflictRelEv #1}}
\newcommand{\composition}{\ensuremath{\otimes} }
\newcommand{\evtopsi}{\textsc{espsi}}
\newcommand{\concurrent}{\ensuremath{||}}
\newcommand{\transition}[1]{\ensuremath{\xrightarrow{#1}}}
\newcommand{\semanticsPsi}{\ensuremath{\vartriangleright}}
\newcommand{\partsof}[1]{\ensuremath{2^#1}}
\newcommand{\entailment}{\ensuremath{\vdash}}
\newcommand{\dcrtopsi}{\textsc{dcrpsi}}
\newcommand{\conditionRel}{\ensuremath{\rightarrow\!\!\!\bullet}}
\newcommand{\responseRel}{\ensuremath{\,\bullet\!\!\!\rightarrow}}
\newcommand{\includeRel}{\ensuremath{\rightarrow\!\!\!+}}
\newcommand{\excludeRel}{\ensuremath{\rightarrow\!\!\!\%}}
\newcommand{\milestoneRel}{\ensuremath{\rightarrow\!\!\!\diamond}}
\newcommand{\idealConfigs}[1]{\ensuremath{\lceil#1\rceil}}
\newcommand{\refinement}{\ensuremath{\mathit{ref}}}
\title{Concurrency Models with Causality and Events as Psi-calculi
}
\author{
\hspace{-4ex}H\aa{}kon Normann
\hspace{14ex}
Cristian Prisacariu
\thanks{This author was partially supported by the project \href{https://www.offpad.org/}{OffPAD} with number E!8324 part of the \href{http://www.eurekanetwork.org/activities/eurostars}{Eurostars} program funded by the \href{http://www.eurekanetwork.org}{EUREKA} and European Community.}
\institute{IT University of Copenhagen, \hspace{10ex} Dept. of Informatics, University of Oslo}
\email{\{haakno,cristi\}@ifi.uio.no}
\and 
Thomas Hildebrandt
\institute{IT University of Copenhagen, \ -- \ Rued Langgaardsvej 7, 2300 Copenhagen, Denmark}
\email{hilde@itu.dk}
}
\begin{document}
\maketitle

\begin{abstract}
Psi-calculi are a parametric framework for nominal calculi, where standard calculi are found as instances, like the pi-calculus, or the cryptographic spi-calculus and applied-pi. Psi-calculi have an interleaving operational semantics, with a strong foundation on the theory of nominal sets and process algebras. Much of the expressive power of psi-calculi comes from their logical part, i.e., assertions, conditions, and entailment, which are left quite open thus accommodating a wide range of logics. We are interested in how this expressiveness can deal with event-based models of concurrency. We thus take the popular prime event structures model and give an encoding into an instance of psi-calculi. We also take the recent and expressive model of Dynamic Condition Response Graphs (in which event structures are strictly included) and give an encoding into another corresponding instance of psi-calculi. The encodings that we achieve look rather natural and intuitive. Additional results about these encodings give us more confidence in their correctness.
\end{abstract}

\section{Introduction}\label{sec_intro}

Psi-calculi \cite{11psi_journal} are a recent framework where various existing calculi can be found as instances. In particular, the spi- and applied-pi calculi \cite{AbadiG99spi_calculus,AbadiF01applied_pi} are two instances of interest for security. Psi-calculi can also accommodate probabilistic models, by going through CC-pi \cite{BuscemiM07ccpi,NicolaFMPT05qosCCpi} which has already been treated as a corresponding psi-calculus instance.
The theory of psi-calculi is based on nominal data structures \cite{pitts_book_nominal}. 
Typed psi-calculus exists \cite{Huttel11typed_psi} as well as related instantiations as distributed pi-calculus \cite{huttel13private}.
Psi-calculi can be seen as a generalization of pi-calculus with two main features: \textit{(i)} nominal data structures (i.e., general, possibly open, terms) in place of communication channels and also in place of the communicated data; and \textit{(ii)} a rather open logic for capturing dependencies (i.e., through conditions and entailment) on the environment (i.e., assertions) of the processes.

The semantics of psi-calculi is given through structural operational rules and adopts an interleaving approach to concurrency, in the usual style of process algebras.
On the other hand, event-based models of concurrency take a non-interleaving view. These usually form domains and are used to give denotational semantics, as e.g., done by Winskel in \cite{Winskel82CCSevents,winskel95modelsCategory}.
Many times non-interleaving models of concurrency can actually distinguish between interleaving and, so called, ``true'' concurrency, as is the case with higher dimensional automata \cite{pratt91hda,Pratt00HDArev,Glabbeek06HDA}, configuration structures  \cite{GlabbeekP09configStruct}, or Chu spaces \cite{gupta94phd_chu,pratt95chu}.
The recent Dynamic Condition Response graphs (abbreviated DCR-graphs or DCRs) \cite{Hildebrandt10DCRs} is a model of concurrency with high expressive power which strictly extends event structures by refining the notions of dependent and conflicting events, and including the notion of response. Due to their graphical nature, DCRs have been successfully used in industry to model business processes \cite{SlaatsMHM13DCRsBusines}.

In this paper we are interested in how psi-calculi could accommodate the event structures model of concurrency \cite{NielsenPW79eventstructures,Winskel86}, with a final goal of capturing the DCRs model \cite{Hildebrandt10DCRs}.
Event names in event-based models of concurrency are unique, and can thus be thought of nominals, whereas the execution of an event can be seen as a communication or action of some sort. The dependencies between events that an event structure defines can be captured with rather simple assertions on nominal data structures, whereas the notion of computation is captured through reduction steps between psi-processes. To be confident on the encodings, we like to see a correlation between the notions of concurrency from the two encoded models and the interleaving diamonds from the psi-calculus behaviour.

These are the basic ideas we follow in this work to give encodings of event structures and DCRs into corresponding instances of psi-calculus. 
After a couple of results meant to explain better the correlation between the encoding and the event structure model, we give a result that shows that the concurrency embodied by the event structure is captured in the encoding psi-process through the standard interleaving diamond.
For the event structures encoding we also give a result that identifies the syntactic shape of those psi-processes which correspond exactly to event structures.
Another feature of true concurrency models is that they are well behaved wrt.\ action refinement \cite{GlabbeekG01refinement}. For this we give a result showing that action refinement is preserved by our translation; under a properly defined refining operation on psi-processes, which we define similarly to the refinement operation on the event structures.

\vspace{-1ex}
\section{Background}\label{sec_background}

\vspace{-1ex}
\subsection{On psi-calculi}\label{subsec_background_psi}

\textit{Psi-calculus} \cite{11psi_journal} has been developed as a framework for defining nominal process calculi, like the many variants of the pi-calculus \cite{milner92picalcul}. The psi-calculi framework is based on nominal datatypes, \cite[Sec.2.1]{11psi_journal} giving an introduction to nominal sets used in psi-calculi. We will not explain much the nominal datatypes in this paper, but refer the reader to the book \cite{pitts_book_nominal} which contains a thorough treatment of both the theory behind nominal sets as well as various applications (e.g., see \cite[Ch.8]{pitts_book_nominal} for nominal algebraic datatypes). We expect, though, some familiarity with notions of algebraic datatypes and term algebras. 

The psi-calculi framework is parametric; instantiating the parameters accordingly, one obtains an \textit{instance of psi-calculi}, like the pi-calculus, or the cryptographic spi-calculus.
These parameters are:
\begin{center}
\begin{tabular}{ll}
\psiTerms & terms (data/channels)\\
\psiConditions & conditions\\
\psiAssertions & assertions
\end{tabular} 
\end{center}
which are nominal datatypes not necessarily disjoint; together with the following operators:
\begin{center}
\begin{tabular}{ll}
$\psiChanEq\ :\ \psiTerms\times\psiTerms\rightarrow\psiConditions$ & channel equality\\
$\psiCompAssert\ :\ \psiAssertions\times\psiAssertions\rightarrow\psiAssertions$ & composition of assertions\\
$\psiAssertUnit\ \in\ \psiAssertions$ & minimal assertion\\
$\psiEntailment\ \subseteq\ \psiAssertions\times\psiConditions$ & entailment relation
\end{tabular} 
\end{center}

Intuitively, terms can be seen as generated from a signature, as in term algebras; the conditions and assertions can be like in first-order logic; the minimal assertion being top/true, entailment the one from first-order logic, and composition taken as conjunction. 
It is helpful to think of assertions and conditions as logical formulas, and the entailment relation as an entailment in logic; but allow the intuition to think of logics abstractly, not just FOL, so that assertions and conditions are used to express any logical statements, where the entailment defines when assertions entail conditions (do not restrict to only thinking of truth tables; e.g., in our encodings we will use an extended logic for sets, with membership, pairs, etc.).
We will shortly exemplify how pi-calculus is instantiated in this framework.
The operators are usually written infix, i.e.: $M\psiChanEq N$, $\psiassertion \psiCompAssert\psiassertion'$, $\psiassertion\psiEntailment\varphi$.

The above operators need to obey some natural requirements, when instantiated. Channel equality must be symmetric and transitive. The composition of assertions must be associative, commutative, and have \psiAssertUnit\ as unit; moreover, composition must preserve equality of assertions, where two assertions are considered equal iff they entail the same conditions (i.e., for $\psiassertion,\psiassertion'\in\psiAssertions$ we define the equality  $\psiassertion\psiEquivAssert\psiassertion'$ iff $\forall\varphi\in\psiConditions:\psiassertion\psiEntailment\varphi \Leftrightarrow \psiassertion'\psiEntailment\varphi$).

The intuition is that assertions will be used to assert about the environment of the processes. Conditions will be used as guards for guarded (non-deterministic) choices, and are to be tested against the assertion of the environment for entailment. Terms are used to represent complex data communicated through channels, but will also be used to define the channels themselves, which can thus be more than just mere names, as in pi-calculus. The composition of assertions should capture the notion of combining assumptions from several components of the environment.

The syntax for building psi-process is the following (psi-processes are denoted by the $P,Q,\dots$; terms from \psiTerms\ by $M,N,\dots$):
\begin{center}
\begin{tabular}{ll}
$\psiEmptyProc$ & Empty/trivial process\\
$\psiOutput{M}{N}.P$ & Output\\
$\psiInput{M}{(\lambda\tilde{x})N}.P$ & Input\\
$\psiCase{\varphi_{1}:P_{1},\dots,\varphi_{n}:P_{n}}$ & Conditional (non-deterministic) choice\\
$\psiNew{a}P$ & Restriction of name $a$ inside processes $P$\\
$P\psiPar Q$ & Parallel composition\\
$\psiRec P$ & Replication\\
$\psiAssertOp{\psiassertion}$ & Assertions
\end{tabular} 
\end{center}

The input and output processes are as in pi-calculus only that the channel objects $M$ can be arbitrary terms. In the input process the object $(\lambda\tilde{x})N$ is a pattern with the variables $\tilde{x}$ bound in $N$ as well as in the continuation process $P$. Intuitively, any term message received on $M$ must match the pattern $N$ for some substitution of the variables $\tilde{x}$. The same substitution is used to substitute these variables in $P$ after a successful match.
The traditional pi-calculus input $a(x).P$ would be modelled in psi-calculi as $\psiInput{a}{(\lambda x)x}.P$, where the simple names $a$ are the only terms allowed.
Restriction, parallel, and replication are the standard constructs of pi-calculus.

The case process behaves like one of the $P_{i}$ for which the condition $\varphi_{i}$ is entailed by the current environment assumption, as defined by the notion of \textit{frame} which we present later. This notion of frame is familiar from the applied pi-calculus, where it was introduced with the purpose of capturing static information about the environment (or seen in reverse, the frame is the static information that the current process exposes to the environment).
A particular use of case is as $\psiCase{\varphi:P}$ which can be read as $\mathbf{if}\ \varphi\ \mathbf{then}\ P$. Another special usage of case is as $\psiCase{\top:P_{1},\,\top:P_{2}}$, where $\psiassertion\psiEntailment\top$ is a special condition that is entailed by any assertion, like $a\psiChanEq a$; this use is mimicking the pi-calculus non-deterministic choice $P_{1} + P_{2}$.
Infinite summation is sometimes found in process algebras, e.g., in Milner's SCCS \cite{Milner83SCCS}. In the case of psi-calculi an infinite case construct can be used as $\psiCase{\tilde{\varphi_{i}}:\tilde{P_{i}}}$ where we use infinite lists to represent the respective condition/process pairs. There is no change to the semantics. The same semantics works for infinite parallel processes as well; though the replication is the preferred way to obtain infinite parallel components.

Assertions $\psiAssertOp{\psiassertion}$ can float freely in a process (i.e., be put in parallel) describing assumptions about the environment. Otherwise, assertions can appear at the end of a sequence of input/output actions, i.e., these are the guarantees that a process provides after it makes an action (on the same lines as in assume/guarantee reasoning about programs). Assertion processes are somehow similar to the active substitutions of the applied pi-calculus, only that assertions do not have computational behaviour, but only restrict the behaviour of the other constructs by providing their assumptions about the environment.

\begin{example}[pi-calculus as an instance]\label{ex_pi_instance}
To obtain pi-calculus \cite{milner92picalcul} as an instance of psi-calculus use the following, built over a single set of names $\mathcal{N}$:
\begin{center}
\begin{tabular}{lcl}
\psiTerms & \defequal & $\mathcal{N}$\\
\psiConditions & \defequal & $\{a=b \mid a,b\in\psiTerms\}$\\
\psiAssertions & \defequal & $\{\psiAssertUnit\}$\\
$\psiChanEq$ &  \defequal & $=$\\
$\psiEntailment$ &  \defequal & $\{(\psiAssertUnit,a=a) \mid a\in \psiTerms\}$
\end{tabular} 
\end{center}
with the trivial definition for the composition operation.
The only terms are the channel names $a\in\mathcal{N}$, and there is no other assertion than the unit. The conditions are equality tests for channel names, where the only successful tests are those where the names are equal. Hence, channel comparison is defined as just name equality.
\end{example}

Psi-calculus is given an operational semantics in \cite{11psi_journal} using labelled transition systems, where the nodes are the process terms and the transitions represent one reduction step, labelled with the action that the process executes. The actions, generally denoted by $\alpha,\beta$, represent respectively the input and output constructions, as well as $\internalAction$ the internal synchronization/communication action: 
%

\vspace{1ex}\centerline{$\psiOutput{M}{(\nu\tilde{a})N} \mid \psiInput{M}{N} \mid \internalAction$}


Transitions are done in a context, which is represented as an assertion \psiassertion, capturing assumptions about the environment:

\centerline{$\psitransition{\psiassertion}{P}{\alpha}{P'}$}

\noindent Intuitively, the above transition could be read as: The process $P$ can perform an action $\alpha$ in an environment respecting the assumptions in $\psiassertion$, after which it would behave like the process $P'$.

The context assertion is obtained using the notion of \textit{frame} which essentially collects (using the composition operation) the outer-most assertions of a process.
The frame $\psiframe{P}$ is defined inductively on the structure of the process as:
\begin{center}
\begin{tabular}{l}
$\psiframe{\psiAssertOp{\psiassertion}} = \psiassertion$\\
$\psiframe{P\psiPar Q} = \psiframe{P} \psiCompAssert \psiframe{Q}$\\
$\psiframe{\psiNew{a}{P}} = \psiNew{a}{\psiframe{P}}$\\
$\psiframe{\psiRec P} = \psiframe{\psiCase{\tilde{\varphi}:\tilde{P}}} = \psiframe{\psiOutput{M}{N}.P} = \psiframe{\psiInput{M}{(\lambda\tilde{x})N}.P} = \psiAssertUnit$
\end{tabular} 
\end{center}
Any assertion that occurs under an action prefix or a condition is not visible in the frame.

We give only an exemplification of the transition rules for psi-calculus, and refer to \cite[Table 1]{11psi_journal} for the full definition. The \textsc{(case)} rule shows how the conditions are tested against the context assertions. The communication rule \textsc{(com)} shows how the environment processes executing in parallel contribute their top-most assertions to make the new context assertion for the input-output action of the other parallel processes.
In the \textsc{(com)} rule the assertions $\psiassertion_{P}$ and $\psiassertion_{Q}$ come from the frames of $\psiframe{P}=\psiNew{\tilde{b}_{P}}{\psiassertion_{P}}$ respectively $\psiframe{Q}=\psiNew{\tilde{b}_{Q}}{\psiassertion_{Q}}$.
In \textsc{(par)} $bn(\alpha)\psifresh Q$ says that the bound names of $\alpha$ are fresh in $Q$.

\begin{center}
\vspace{2ex}
\AxiomC{$\psiassertion \psiEntailment M\psiChanEq K$}
\RightLabel{\textsc{(inn)}}
\UnaryInfC{${\psitransition{\psiassertion}{\psiInput{M}{(\lambda \tilde{y}) N}.P}{\underline{K}N[\tilde{y}:= \tilde{L}]}{P[\tilde{y}:= \tilde{L}]}}$}
\DisplayProof
\hspace{3ex}
\vspace{2ex}
\AxiomC{$\psiassertion \psiEntailment M\psiChanEq K$}
\RightLabel{\textsc{(out)}}
\UnaryInfC{$\psitransition{\psiassertion}{\psiOutput{M}{N}.P}{\overline{K}N}{P}$}
\DisplayProof
\vspace{2ex}
%
\AxiomC{$\psitransition{\psiassertion}{P_{i}}{\alpha}{P'}$}
\AxiomC{$\psiassertion\psiEntailment\varphi_{i}$}
\RightLabel{\textsc{(case)}}
\BinaryInfC{$\psitransition{\psiassertion}{\psiCase{\tilde{\varphi}:\tilde{P}}}{\alpha}{P'}$}
\DisplayProof
\AxiomC{$\psitransition{\psiassertion\psiCompAssert\psiassertion_Q}{P}{\alpha}{P'}$}
\AxiomC{$bn(\alpha)\psifresh Q$}
\RightLabel{\textsc{(par)}}
\BinaryInfC{$\psitransition{\psiassertion}{P\psiPar Q}{\alpha}{P'\psiPar Q}$}
\DisplayProof
%
%
\AxiomC{$\psitransition{\psiassertion}{ P|!P}{\alpha}{P'}$}
\RightLabel{\textsc{(rep)}}
\UnaryInfC{$\psitransition{\psiassertion}{ !P}{\alpha}{P'}$}
\DisplayProof
\AxiomC{$\psitransition{\psiassertion_{Q}\psiCompAssert\psiassertion}{P}{\overline{M}(\nu\tilde{a})N}{P'} $}
\AxiomC{$\psitransition{\psiassertion_{P}\psiCompAssert\psiassertion}{Q}{\underline{K}N}{Q'} $}
\AxiomC{$\psiassertion_{Q}\psiCompAssert\psiassertion_{P}\psiCompAssert\psiassertion\psiEntailment M\psiChanEq K$}
\RightLabel{\textsc{(com)}}
\TrinaryInfC{$\psitransition{\psiassertion}{P\psiPar Q}{\internalAction}{\psiNew{\tilde{a}}{(P'\psiPar Q')}}$}
\DisplayProof
\end{center}
%

There is no transition rule for the assertion process; this is only used in constructing frames. Once an assertion process is reached, the computation stops, and this assertion remains floating among the other parallel processes and will be composed part of the frames, when necessary, like in the case of the communication rule.
The empty process has the same behaviour as, and thus can be modelled by, the trivial assertion $\psiAssertOp{\psiAssertUnit}$.

\subsection{On event structures}\label{subsec_ES_DCRs}

For event structures we try to follow the standard notation and terminology from \cite[sec.8]{winskel95modelsCategory}.

\begin{definition}[prime event structures]\label{def_eventStructures}
\emph{A labelled prime event structure} over alphabet Act is a tuple $\mathcal{E} = (E,\condRelEv,\conflictRelEv,l )$ where $E$ is a possibly infinite set of events, $\condRelEv\hspace{3pt}\subseteq E\times E$ is a partial order (the \emph{causality} relation) satisfying
\begin{enumerate}
\item\label{es_finiteCauses} 
 \emph{the principle of finite causes}, i.e.: $\forall e\in E:\{d\in E \mid d\condRelEv e\}$ is finite,
\end{enumerate}
and $\sharp \subseteq E\times E$ is an irreflexive, symmetric binary relation (the \emph{conflict} relation) satisfying
\begin{enumerate}
\setcounter{enumi}{1}
\item\label{es_conflictHer} 
\emph{the principle of conflict heredity}, i.e., $\forall d, e, f \in E: d\condRelEv e\wedge d\conflictRelEv f \Rightarrow e\conflictRelEv f$.
\end{enumerate}
and $l:E\rightarrow Act$ is the labelling function. 
Denote by \textbf{E} the set of all prime event structures. 
\end{definition}

Intuitively, a prime event structure models a concurrent system by taking $d\condRelEv e$ to mean that event $d$ is a prerequisite of event $e$, i.e., event $e$ cannot happen before event $d$ has been done. A conflict $d\conflictRelEv e$ says that events $d$ and $e$ cannot both happen in the same run. 

\begin{definition}[concurrency]\label{def_conc_ES}
\emph{Casual independence (concurrency)} between events is defined in terms of the above two relations as 
\[
d\concurrent e \defequal \neg(d\condRelEv e\vee e\condRelEv d\vee d\conflictRelEv e)
\]
capturing the intuition that two events are concurrent when there is no causal dependence between the two and they are not in conflict.
\end{definition}

The behaviour of an event structure is described by subsets of events that happened in some (partial) run. This is called a \textit{configuration} of the event structure, and \textit{steps} can be defined between configurations.

\begin{definition}[configurations]\label{def_configs_ES}
Define a \emph{configuration} of an event structure $\mathcal{E} = (E,\condRelEv,\conflictRelEv)$ to be a finite subset of events $C\subseteq E$ that respects: 
\begin{enumerate}
\item \emph{conflict-freeness}: $\forall e,e'\in C: \neg(e\conflictRelEv e')$ and,
\item \emph{downwards-closure}: $\forall e,e'\in E: e'\condRelEv e \wedge e\in C \Rightarrow e'\in C$.
\end{enumerate}
We denote the set of all configurations of some event structure by $\mathcal{C}_{\mathcal{E}}$.
\end{definition}

Note in particular that $\emptyset$ is a configuration (i.e., the root configuration) and that any set $\idealConfigs{e}\defequal\{e'\in E \mid e'\condRelEv e\}$ is also a configuration determined by the single event $e$.
Events determine steps between configurations in the sense that $C \transition{e} C'$ whenever $C,C'$ are configurations, $e\not\in C$, and $C'=C\cup \{e\}$.

\begin{remark}\label{remark_ES}
It is known (see e.g., \cite[Prop.18]{winskel95modelsCategory}) that prime event structures are fully determined by their sets of configurations, i.e., the relations of causality, conflict, and concurrency can be recovered only from the set of configurations $\mathcal{C}_{\mathcal{E}}$ as follows:
\begin{enumerate}
\item $e\condRelEv e'$ iff $\forall C\in\mathcal{C}_{\mathcal{E}}: e'\in C \Rightarrow e\in C$;
\item $e\conflictRelEv e'$ iff $\forall C\in \mathcal{C}_{\mathcal{E}}: \neg(e\in C \wedge e'\in C)$;
\item $e\concurrent e'$ iff $\exists C,C'\in\mathcal{C}_{\mathcal{E}}: e\in C \wedge e'\not\in C \wedge e'\in C' \wedge e\not\in C' \wedge C\cup C'\in \mathcal{C}_{\mathcal{E}}$.
\end{enumerate}
\end{remark}

For some event $e$ we denote by $\precondset{e} = \{e'\in E \mid e' \condRelEv e\}$ the set of all events which are conditions of $e$ (which is the same as the notation $\idealConfigs{e}$ from \cite{winskel95modelsCategory}, but we prefer to use the above so to be more in sync with similar notations we use in this paper for similar sets defined for DCRs too), and $\preconfset{e}=\{e'\in E \mid e' \conflictRelEv e \}$ those events in conflict with $e$.

\subsection{On DCR-graphs}\label{subsec_dcr}

Dynamic Condition Response graphs (DCR-graphs) is a recent model of concurrency, which generalizes event structures by taking into account progress in terms of demanded responses, while giving a finite model of possibly infinite behaviour. Using a graphic notation along with the formal, it is already used in industry for workflow management. 
We follow the notations for DCRs from 
\cite{Hildebrandt10DCRs,HildebrandtMS11DCRs}. 
 
 \begin{definition}[DCR Graphs]
 We define a \emph{Dynamic Condition Response Graph} to be a tuple $G=(E,M,\conditionRel,\responseRel,\milestoneRel,\includeRel,\excludeRel,L,l)$ where
 \begin{enumerate}
 \item $E$ is a set of events,
 \item $M\in \partsof{E}\times \partsof{E}\times \partsof{E}$ is the initial marking,
 \item $\conditionRel,\responseRel,\milestoneRel,\includeRel,\excludeRel\subseteq E\times E$ are respectively called the condition, response, milestone, include, and exclude relations, 
 \item $l:E\rightarrow L$ is a labelling function mapping events to labels from $L$.
 \end{enumerate}
 \end{definition}

 
For any relation $\rightarrow\in\{\conditionRel,\responseRel,\milestoneRel,\includeRel,\excludeRel\}$, we use the notation $e\rightarrow$ for the set $\{e'\in E \mid e\rightarrow e'\}$ and $\rightarrow e$ for the set $\{e'\in E \mid e'\rightarrow e\}$ of events $e'\in E$ which are in the respective relation with $e$. 
 
A marking $M=(Ex,Re,In)$ represents a state of the DCR. One should understand $Ex$ as the set of \textit{executed} events, $Re$ the set of \textit{response} events that must happen sometime in the future, and $In$ the set of \textit{included} events, i.e., those that \textit{may} happen in the next steps. The five relations impose constraints on the events and dictate the dynamic inclusion and exclusion of events. 
 
For a DCR graph $(E, M, \conditionRel,\responseRel,\milestoneRel,\includeRel,\excludeRel)$ and a marking $M=(Ex,Re,In)$, we say that an \textit{event $e\in E$ is enabled in $M$}, written $M\vdash e$, iff $e\in In\wedge(In\cap \conditionRel e)\subseteq Ex\wedge(In\cap\milestoneRel e)\subseteq E\setminus Re$. 
Intuitively, an event can only happen if it is included, all its included preconditions have been executed, and 
none of the included events that are milestones for it are scheduled responses.
The behaviour of a DCR is given through transitions between markings done by executing enabled events. The result of the execution of the event $e$ in marking $M=(Ex,Re,In)$ is defined as the new marking $M' \eqbydef (Ex\cup\{e\}, (Re\setminus\{e\})\cup e\responseRel,\newline (In\setminus e\excludeRel)\cup e\includeRel)$. We denote a transition as $M\transition{e}M'$.
 %
%
An event can happen an arbitrary number of times as long as it is enabled. Events that should happen only once must explicitly be excluded. 
 
An event structure $(E,\condRelEv,\conflictRelEv,l)$ is a special case of a DCR graph $(E,M,\condRelEv,\emptyset,\emptyset,\emptyset,\conflictRelEv\cup id)$ where each event is excluding itself, i.e., cannot be done multiple times, and the conflict relation is modelled by mutual exclusion. The response, include, and milestone relations are empty, and initially all events are included, as the marking $M=(\emptyset,\emptyset,E)$, i.e., all events can be executed; this comes from \cite[Prop.1\&3]{Hildebrandt10DCRs}.
Essentially, the conflict relation excludes all related events; and the causality relation is the condition relation of the DCR. The rest of the DCR relations are just additions wrt.\ the event structures model, therefore should be empty. Moreover, the initial marking has no executed events and no responses, but all events are initially included.
Opposed to the behaviour of event structures, in full DCRs we also have that the causality between events can change during the run, as events are included or excluded. Moreover, the conflict in DCRs is not permanent as is the case with event structures or with the various proposals of cancellation of Pratt. Conflict in DCR can be transient since an event can be included and excluded during a run.
So, already at the conflict and causality relations, the DCRs depart from event structures in a non-trivial manner.

DCRs have peculiar aspects which offer them good expressive power that proved useful in various practical situations, like for business workflows. But we are not concerned with explaining or motivating these more, as the related literature does a much better job. We are concerned with finding a nice and intuitive encoding of DCRs in the expressive psi-calculi framework.

\section{Encoding event structures in psi-calculi}\label{sec_evnet_into_psi}

Due to their popularity, we have chosen to encode, in this section, the version of event structures called \textit{prime} as defined in Definition~\ref{def_eventStructures}. These have many nice features like correlations with domains which makes them a good candidate for being used for denotational semantics of concurrent programs. Nevertheless, we believe that other, more general, versions of event structures, like those from \cite{Winskel86} or \cite{GlabbeekP09configStruct}, can be encoded in psi-calculi following similar ideas as we give here.

\begin{definition}[event psi-calculus]\label{def_event_psi}
We define a psi-calculus instance, called \emph{\eventpsi}, parametrized by a nominal set $E$, to be understood as \emph{events}, by providing the following definitions of the key elements of a psi-calculus instance:
\vspace{-1ex}\[\psiTerms\eqbydef E
\hspace{5ex} \psiConditions\eqbydef \partsof{E} \times \partsof{E}
\hspace{5ex} \psiAssertions\eqbydef \partsof{E}
\hspace{5ex} \psiChanEq\eqbydef=
\hspace{5ex} \psiCompAssert\eqbydef\cup
\hspace{5ex} \psiAssertUnit\eqbydef\emptyset
\vspace{-1ex}\]
\[\psiEntailment\eqbydef \Psi\psiEntailment\varphi\hspace{3pt}\textnormal{iff}\hspace{3pt}(\pi_{L}(\varphi)\subseteq\Psi)\wedge(\Psi\cap\pi_{R}(\varphi)=\emptyset) 
\hspace{5ex} \Psi\psiEntailment a\psiChanEq b \hspace{3pt}\textnormal{iff}\hspace{3pt} a = b
\]
where $\psiTerms$, $\psiConditions$, and $\psiAssertions$ are nominal data types built over the nominal set $E$, and $\pi_{L},\pi_{R}$ are the standard left/right projection functions for pairs. 
Denote by $ \mathit{en}(P)\!\!\subseteq\!E$ the event names appearing in a process $P$.
\end{definition}

The conditions \psiConditions\ are pairs of subsets of events, which intuitively will hold the enabling conditions for an event, i.e., the left set holding those events it depends on and the right set holding those events it is in conflict with. The assertions \psiAssertions\ intuitively can be understood as capturing the set of all executed events, i.e., a configuration of the event structure. Channel equivalence is equality of event names, as in standard pi-calculus. 
Composition of two assertions is the union of the sets. 
The entailment \entailment\ intuitively captures when events may fire, thus describing when events are enabled by a configuration.

It is easy to see that our definitions respect the restrictions of making a psi-calculus instance. In particular, channel equivalence is symmetric and transitive since equality is. The \composition\ is compositional, associative and commutative, as $\cup$ is; and moreover $\emptyset\cup S = S$, for any set S, i.e., $\mathbf{1}$ is the identity.

\begin{definition}[event structures to \eventpsi]\label{def_evtopsi}
We define a function $\evtopsi$ which given an event structure $\mathcal{E} = (E, \condRelEv,\conflictRelEv)$ and a configuration $C$ of $\mathcal{E}$, returns an $\eventpsi$-process $P_{E}=|_{e\in E} P_{e}$ with $P_e =\psiAssertOp{\{e\}}$ if $e\in C$, otherwise $P_e = \psiCase{\varphi_e: \psiOutput{e}{e}.\psiAssertOp{\{e\}}}$, where $\varphi_e = (\precondset{e}, \preconfset{e})$.
\end{definition}

A process generated by the \evtopsi\ function is built up from smaller ``event processes'' put in parallel. These come in two forms: those corresponding to the events in the configuration of the translated event structure (i.e., those that  already happened), and processes corresponding to events that have not happened yet. 
For the latter we use a condition $\varphi_e$ that contains the set $\precondset{e}$ of events $e$ is depending on and the set $\preconfset{e}$ of events $e$ is in conflict with. Together these two sets along with the frame of the entire psi-process, decide, through the entailment, if the event can execute or not. 
When an event happens we will have a transition over the channel with the same name as the event.
Usually an event structure is encoded into \eventpsi\ starting from the empty configuration, i.e., with no behaviour.

The set \psiTerms\ may be infinite, hence elements of \psiAssertions\ and \psiConditions\ may be infinite terms (sets). In the encoding produced by \evtopsi, the conditions have $\pi_{L}(\varphi)$ finite, because of the principle of finite causes of Definition~\ref{def_eventStructures}.1 that event structures respect.
Still, the $\pi_{R}(\varphi)$ may be infinite, because there is no restriction on the conflict relation in event structures, and thus an event can be in conflict with infinitely many events, therefore \evtopsi\ may create infinite condition terms.

An intuitive example where this would appear is when we model looping behaviour of a system with event structures, and we have a looping branch, which would be unfolded into infinitely many sequential events, and we have a second branch which cancels this looping branch (i.e., as with a choice). The cancelling of the looping branch would mean cancelling all the infinitely many events that encode this branch. That is to say, the single event is in conflict with all the events on the looping branch.

Assertion terms from \psiAssertions, produced by \evtopsi, are always finite because they encode, cf.\ Lemma~\ref{conf_maint}, configurations, which are finite sets. Therefore, it is not problematic to have the infinite part of the conditions, since the only place where this is used is in deciding the entailment, which would thus always terminate, hence be decidable for any assertion/configuration used in the encoding.

Besides this, the encoding \evtopsi\ builds in parallel infinitely many processes, one for each $e\in E$.
For practical reasons infinite terms are not desired. 
%
%
But there are works with infinite terms, like infinite summation in SCCS, infinite case construct for psi-calculus, or infinite conjunctions in some logics. Such infinite formulas usually make the presentation more nice.
In our case we also wanted to have the nice presentation, therefore we opted to generate infinite terms. From our terms it is clear to see the correlation with the event structures. We work the same as in event structures, by tacitly having infinite events, thus infinite parallel processes.
Encoding the infinite terms with the replication (i.e., one replication of an infinite case construct) would make the presentation more cluttered, with the details easily becoming unpleasant.


We could say that prime event structures are ``wildly'' infinite. If we would otherwise take a kind of event structures that are regular, i.e., are build from some operations like choice and sequence, and the infinity comes only from some recursion operation, then we think that this infinity could be encoded with the finite apparatus of psi-calculi.
But it is not clear which event structures are ``regular''; and for our purposes the prime event structures are a good enough concurrency model to look at.

Our intention is to investigate the expressive power of the psi-calculi framework; the power of its logical part, i.e., the assertions, conditions, and entailment, and the complex nominal data structures that can be used both for communication and for transmitted data. 

\begin{lemma}[correspondence configuration--frame]\label{conf_maint}
For any event structure $\mathcal{E}$ and configuration $C_{\mathcal{E}}$, the frame of the \eventpsi-process $\evtopsi(\mathcal{E},C_{\mathcal{E}})$ corresponds to the configuration $C_{\mathcal{E}}$. 
\end{lemma}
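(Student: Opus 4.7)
The plan is to simply unfold the definition of $\evtopsi$ and compute the frame structurally, using the fact that in \eventpsi the assertion composition $\composition$ is set union and the unit $\psiAssertUnit$ is the empty set, so the frame collapses to the union of the singletons contributed by the executed events.

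First I would fix an event structure $\mathcal{E}=(E,\condRelEv,\conflictRelEv,l)$ and a configuration $C_{\mathcal{E}}\in\mathcal{C}_{\mathcal{E}}$, and by Definition~\ref{def_evtopsi} write $\evtopsi(\mathcal{E},C_{\mathcal{E}}) = |_{e\in E} P_e$ with $P_e = \psiAssertOp{\{e\}}$ when $e\in C_{\mathcal{E}}$ and $P_e = \psiCase{\varphi_e:\psiOutput{e}{e}.\psiAssertOp{\{e\}}}$ otherwise. Then I would compute the frame of each component using the inductive definition of $\mathcal{F}(\cdot)$ given in Section~\ref{subsec_background_psi}: for $e\in C_{\mathcal{E}}$ we immediately get $\psiframe{P_e}=\{e\}$, while for $e\notin C_{\mathcal{E}}$ the process $P_e$ is a \textbf{case}-construct, and the clause $\psiframe{\psiCase{\tilde\varphi:\tilde P}}=\psiAssertUnit$ together with $\psiAssertUnit\eqbydef\emptyset$ in \eventpsi yields $\psiframe{P_e}=\emptyset$.

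Next I would extend the clause $\psiframe{P\psiPar Q}=\psiframe{P}\composition\psiframe{Q}$ to the (possibly infinite) parallel composition $|_{e\in E} P_e$, which is unproblematic because in \eventpsi the operation $\composition$ is set union, hence commutative, associative, idempotent, and has $\emptyset$ as unit, so the arbitrary composition is well-defined as $\bigcup_{e\in E}\psiframe{P_e}$. Splitting this union along $C_{\mathcal{E}}$ and its complement, the second part contributes $\bigcup_{e\notin C_{\mathcal{E}}}\emptyset=\emptyset$, and the first part contributes $\bigcup_{e\in C_{\mathcal{E}}}\{e\}=C_{\mathcal{E}}$, giving $\psiframe{\evtopsi(\mathcal{E},C_{\mathcal{E}})}=C_{\mathcal{E}}$, which is the desired correspondence.

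The only delicate point, which is really the only obstacle, is justifying the extension of the frame operator to the infinite parallel composition used by \evtopsi; but this follows from the same convention already mentioned in the paper for the infinite case-construct and infinite parallel processes, and is consistent because $\composition=\cup$ is well-defined on arbitrary families of sets. Everything else is a routine unfolding of definitions, so I would keep the proof short and conclude by observing that, in particular, the frame is always a \emph{finite} subset of $E$ (matching the finiteness requirement in Definition~\ref{def_configs_ES}), as was anticipated in the discussion following Definition~\ref{def_evtopsi}.
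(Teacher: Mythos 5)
Your proposal is correct and follows essentially the same route as the paper's own proof: compute $\psiframe{P_e}$ componentwise ($\{e\}$ for executed events, $\psiAssertUnit=\emptyset$ for the \textbf{case}-guarded ones) and take their composition, which is just the union $C_{\mathcal{E}}$. Your extra remarks on the well-definedness of the frame over the infinite parallel composition and on finiteness are sound additions but do not change the argument.
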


\begin{proof}
Denote $\evtopsi(\mathcal{E},C_{\mathcal{E}})=P_{E}$ as in Definition~\ref{def_evtopsi}.
The frame of $P_{E}$ is the composition with \composition\ of the frames of $P_{e}$ for $e\in E$. 
As $P_e$ is either $\psiAssertOp{\{e\}}$ if $e\in C_\mathcal{E}$ or $\psiCase\varphi_e: \psiOutput{e}{e}.\psiAssertOp{\{e\}}$ then the frame of $P_e$ would be either $\psiframe{\psiAssertOp{\{e\}}} = \{e\}$ or $\psiframe{\psiCase\varphi_e: \psiOutput{e}{e}.\psiAssertOp{\{e\}}} = \psiAssertUnit$. Thus the frame of $P_E$ is the \composition  of \psiAssertUnit's and all events in $C_{\mathcal{E}}$, thus having that the frame is the union of all events in $C_{\mathcal{E}}$
\end{proof}

\vspace{-1ex}
\begin{lemma}[transitions preserve configurations]\label{lemma_conf_maint2}
For some event structure $\mathcal{E}$ and some configuration of it $C_\mathcal{E}$, any transition from this configuration $C_\mathcal{E}\transition{e}C_\mathcal{E}'$ is matched by a transition $\emptyset\,\triangleright \evtopsi(\mathcal{E},C_\mathcal{E})\transition{\overline{e}e}\evtopsi(\mathcal{E},C_\mathcal{E}')$ in the corresponding \eventpsi-process.
The other way, any transition $\emptyset\,\triangleright \evtopsi(\mathcal{E},C_\mathcal{E})\transition{\overline{e}e}P'$ is matched by a step $C_\mathcal{E}\transition{e}C_\mathcal{E}'$, with $P'=\evtopsi(\mathcal{E},C_\mathcal{E}')$.
\end{lemma}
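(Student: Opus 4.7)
The plan is to prove both directions by isolating the single syntactic component of $\evtopsi(\mathcal{E},C_\mathcal{E})$ that corresponds to the fired event $e$, lifting (or peeling off) the surrounding parallel context via repeated applications of \textsc{(par)}, and using Lemma~\ref{conf_maint} to identify the assertion context under which the active sub-derivation takes place. The encoding guarantees that each event $e$ appears in exactly one sub-process $P_e$, whose syntactic shape depends only on whether $e\in C_\mathcal{E}$. Thus every reasoning step reduces to a local check at $P_e$, combined with a global bookkeeping of frames.

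\textbf{Forward direction.} Assume $C_\mathcal{E}\transition{e}C_\mathcal{E}'$, so $e\notin C_\mathcal{E}$ and $C_\mathcal{E}'=C_\mathcal{E}\cup\{e\}$ is still a configuration. Then $P_e=\psiCase{\varphi_e:\psiOutput{e}{e}.\psiAssertOp{\{e\}}}$ with $\varphi_e=(\precondset{e},\preconfset{e})$. By Lemma~\ref{conf_maint}, the composed frame of all other parallel components of $\evtopsi(\mathcal{E},C_\mathcal{E})$ equals (up to $\psiEquivAssert$) the set $C_\mathcal{E}$. Downwards-closure of the configuration $C_\mathcal{E}'$ gives $\precondset{e}\subseteq C_\mathcal{E}\cup\{e\}$, and conflict-freeness gives $C_\mathcal{E}\cap\preconfset{e}=\emptyset$, hence $C_\mathcal{E}\psiEntailment\varphi_e$ by the definition of \entailment\ in \eventpsi. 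Apply \textsc{(out)}, then \textsc{(case)}, and finally iterated \textsc{(par)} (whose freshness side-condition $bn(\overline{e}e)\psifresh Q$ is trivial since the action has no bound names); the resulting transition is $\emptyset\,\triangleright \evtopsi(\mathcal{E},C_\mathcal{E})\transition{\overline{e}e}P'$, where $P_e$ is replaced by $\psiAssertOp{\{e\}}$ and every other component is unchanged. This is exactly $\evtopsi(\mathcal{E},C_\mathcal{E}')$ by Definition~\ref{def_evtopsi}.

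\textbf{Backward direction.} Conversely, assume $\emptyset\,\triangleright\evtopsi(\mathcal{E},C_\mathcal{E})\transition{\overline{e}e}P'$. Inspecting the transition rules, the only components of $\evtopsi(\mathcal{E},C_\mathcal{E})$ that can possibly emit an output label are those of the form $\psiCase{\varphi_{e'}:\psiOutput{e'}{e'}.\psiAssertOp{\{e'\}}}$; an output labelled $\overline{e}e$ can therefore come only from $P_e$, forcing $e\notin C_\mathcal{E}$. Stripping \textsc{(par)} steps reduces the derivation to a transition of $P_e$ in a context whose assertion is the composition of the frames of the remaining components, which by Lemma~\ref{conf_maint} is $C_\mathcal{E}$. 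The only applicable rule chain is \textsc{(case)} on top of \textsc{(out)}, and it fires precisely when $C_\mathcal{E}\psiEntailment\varphi_e$, i.e.\ $\precondset{e}\subseteq C_\mathcal{E}$ and $C_\mathcal{E}\cap\preconfset{e}=\emptyset$. These are exactly the conditions that make $C_\mathcal{E}\cup\{e\}$ a configuration, so $C_\mathcal{E}\transition{e}C_\mathcal{E}'$ with $C_\mathcal{E}'=C_\mathcal{E}\cup\{e\}$, and the syntactic residual $P'$ coincides with $\evtopsi(\mathcal{E},C_\mathcal{E}')$.

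\textbf{Expected obstacle.} The only genuinely delicate point is accounting for the potentially infinite parallel composition $|_{e\in E}P_e$ when lifting with \textsc{(par)}: we must be sure that the context assertion used in the local sub-derivation is \emph{exactly} the frame of all other components, not more and not less, and that this composition is well-defined in the nominal setting. Lemma~\ref{conf_maint} already carries the bulk of this work, so the remainder is routine; beyond this, one small notational care is needed to reconcile the reflexive $e\condRelEv e$ entry of $\precondset{e}$ with the requirement $e\notin C_\mathcal{E}$ in the forward direction, which is absorbed by reading $\precondset{e}$ as the causal predecessors strictly needed before $e$ fires.
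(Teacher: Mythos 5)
Your proof is correct and follows essentially the same route as the paper: isolate the case-guarded component $P_e$, use Lemma~\ref{conf_maint} to identify the context assertion of the remaining parallel components with $C_\mathcal{E}$, verify the entailment of $\varphi_e$ via downwards-closure and conflict-freeness, and apply \textsc{(out)}, \textsc{(case)} and \textsc{(par)} to obtain the residual $\evtopsi(\mathcal{E},C_\mathcal{E}')$. You in fact go beyond the paper by spelling out the backward direction (which the paper defers as following from later results) and by flagging the reflexivity issue in $\precondset{e}$, both handled correctly.
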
 

\vspace{-1ex}
\begin{proof}
Before the event $e$ is executed we have that our \eventpsi-process $\evtopsi(\mathcal{E},C_\mathcal{E})$ can we written in the form $P = \psiCase{\varphi_e:\psiOutput{e}{e}.\psiAssertOp{\{e\}}}|Q$.
By Lemma~\ref{conf_maint} we know that the frame of $P$ is the same as $C_\mathcal{E}$, i.e., we have that $\psiframe{P} = \psiAssertUnit\psiCompAssert\psiframe{Q} = \Psi_Q = C_\mathcal{E}$ before $e$ has happened, and $e\notin C_\mathcal{E}$.

We can observe the transition between \eventpsi-processes by the following proof tree, using the transition rules of psi-calculi.
%
%

An event $e$ can happen if the corresponding condition in the \textbf{case} construct is entailed by the appropriate assertion $\Psi_Q\vdash\varphi_e$. This forms the right condition of the \textsc{(case)} rule, saying that all the preconditions of $e$ are met, and $e$ is not in conflict with any event that has happened. This condition is met because $C_\mathcal{E}=\Psi_Q$ and the assumption of the lemma, i.e., the existence of the step, which implies that $e$ is enabled by the configuration $C_\mathcal{E}$, meaning exactly what the definition of the entailment relation needs.

After \transition{\overline{e}e} has happened we have $P'= \psiAssertOp{\{e\}}|Q$ and $\psiframe{P'} = \psiframe{\psiAssertOp{\{e\}}}\composition \psiframe{Q} = \{e\}\cup \Psi_Q$, meaning that the frame of $P'$ corresponds to $C'_\mathcal{E}=C_\mathcal{E}\cup\{e\}$.
From the definition of the translation function $\evtopsi$ it is easy to see that $\evtopsi(\mathcal{E},C_\mathcal{E}')=\psiAssertOp{\{e\}}|Q$.

The second part of the lemma is especially easy after going through the proofs of the next results.
\end{proof}

\begin{theorem}[preserving concurrency]\label{th_preserv_conc}
For an event structure $\mathcal{E}\!=\!(E, \condRelEv, \conflictRelEv)$ with two concurrent events $e\concurrent e'$ then in the translation $\evtopsi(\mathcal{E},\emptyset)$ we find the behaviour forming the interleaving diamond, i.e., there exists $C_{\mathcal{E}}$ s.t.\ $\emptyset \semanticsPsi
\evtopsi(\mathcal{E},C_{\mathcal{E}})\transition{e} P_{1}\transition{e'}P_{2}$ and $\emptyset \semanticsPsi\evtopsi(\mathcal{E},C_{\mathcal{E}})\transition{e'}P_{3}\transition{e}P_{4}$ with $P_{2}=P_{4}$.
\end{theorem}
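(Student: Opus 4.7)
My strategy is to exhibit a single configuration $C_{\mathcal{E}}$ at which both $e$ and $e'$ are enabled, and then close the diamond by invoking Lemma~\ref{lemma_conf_maint2} twice along each branch. The two branches necessarily meet at the same target configuration $C_{\mathcal{E}}\cup\{e,e'\}$, so by Definition~\ref{def_evtopsi} the two resulting psi-processes are literally equal, as the translation $\evtopsi$ is a function of the configuration and not of how it was reached.

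The natural candidate is $C_{\mathcal{E}}\defequal (\precondset{e}\cup\precondset{e'})\setminus\{e,e'\}$, the union of the strict causal pasts of the two concurrent events. I have to verify it is a configuration. Downward closure follows from transitivity of $\condRelEv$: if $f\condRelEv g$ and $g\in C_{\mathcal{E}}$ with, say, $g\condRelEv e$, then $f\condRelEv e$; the case $f\in\{e,e'\}$ is excluded by $e\concurrent e'$, since it would force either $e\condRelEv g\condRelEv e$ (making $g=e$, contradicting $g\in C_{\mathcal{E}}$) or $e\condRelEv e'$. Conflict-freeness is where concurrency really enters: a hypothetical pair $f\conflictRelEv h$ inside $C_{\mathcal{E}}$ has $f\condRelEv e$ or $f\condRelEv e'$ and similarly for $h$; two applications of conflict heredity (Definition~\ref{def_eventStructures}.\ref{es_conflictHer}) in each of the four combinations produce either $e\conflictRelEv e$, $e'\conflictRelEv e'$ (contradicting irreflexivity) or $e\conflictRelEv e'$ (contradicting $e\concurrent e'$).

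Enablement of $e$ at $C_{\mathcal{E}}$ follows from the same ingredients: $\precondset{e}\subseteq C_{\mathcal{E}}\cup\{e\}$ by construction, and $C_{\mathcal{E}}\cap\preconfset{e}=\emptyset$ by the same heredity argument (any $g\in C_{\mathcal{E}}$ with $g\conflictRelEv e$ would yield $e\conflictRelEv e$ or $e'\conflictRelEv e$). By Lemma~\ref{conf_maint} the frame of $\evtopsi(\mathcal{E},C_{\mathcal{E}})$ is exactly $C_{\mathcal{E}}$, hence $C_{\mathcal{E}}\psiEntailment\varphi_e$ by the definition of entailment in Definition~\ref{def_event_psi}; symmetrically $C_{\mathcal{E}}\psiEntailment\varphi_{e'}$. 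After firing $e$ the frame grows by $\{e\}$, but since $e\not\condRelEv e'$ and $e\not\conflictRelEv e'$ (both consequences of $e\concurrent e'$), the condition $\varphi_{e'}$ remains entailed at $C_{\mathcal{E}}\cup\{e\}$, and symmetrically $\varphi_e$ at $C_{\mathcal{E}}\cup\{e'\}$.

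With these preliminaries, two applications of Lemma~\ref{lemma_conf_maint2} along each branch yield the diamond
\[\emptyset\semanticsPsi\evtopsi(\mathcal{E},C_{\mathcal{E}})\transition{\overline{e}e}\evtopsi(\mathcal{E},C_{\mathcal{E}}\cup\{e\})\transition{\overline{e'}e'}\evtopsi(\mathcal{E},C_{\mathcal{E}}\cup\{e,e'\})\]
and symmetrically for the branch firing $e'$ first. Setting $P_2=P_4=\evtopsi(\mathcal{E},C_{\mathcal{E}}\cup\{e,e'\})$ closes the argument. I expect the main obstacle to be the conflict-freeness step; it is the only place where the hypothesis $\neg(e\conflictRelEv e')$ actually gets used, and the four-way case split on where $f,h$ sit relative to $\{e,e'\}$ must be handled carefully because hereditary conflict propagates both sideways (via symmetry of $\conflictRelEv$) and upwards along $\condRelEv$.
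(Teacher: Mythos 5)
Your proof is correct and has the same overall skeleton as the paper's: pick a configuration $C_{\mathcal{E}}$ containing the causal pasts of both events but neither event itself, lift the four event-structure steps to \eventpsi-transitions via Lemma~\ref{lemma_conf_maint2}, and get $P_2=P_4$ because \evtopsi\ depends only on the configuration reached. The genuine difference is where $C_{\mathcal{E}}$ comes from: the paper asserts its existence by appeal to Remark~\ref{remark_ES} (the configuration characterisation of concurrency), whereas you construct it explicitly as $(\precondset{e}\cup\precondset{e'})\setminus\{e,e'\}$ and verify downward closure and conflict-freeness directly from transitivity, conflict heredity and $\neg(e\conflictRelEv e')$. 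Your route is more self-contained and pinpoints exactly where the hypothesis $e\concurrent e'$ is used; the paper's is shorter, but the configuration it claims (one containing both $\precondset{e}$ and $\precondset{e'}$, reachable, excluding $e,e'$) does not literally fall out of the statement of Remark~\ref{remark_ES} without an argument essentially like yours, so your explicitness is a net gain. Two small things you should still add: the theorem's wording ``in the translation $\evtopsi(\mathcal{E},\emptyset)$ we find the behaviour'' asks the diamond to lie in the reachable behaviour of $\evtopsi(\mathcal{E},\emptyset)$, and the paper discharges this by observing that $C_{\mathcal{E}}$ is reachable from the root (any linearisation of $\condRelEv$ on the finite set $C_{\mathcal{E}}$ yields a step sequence from $\emptyset$, which Lemma~\ref{lemma_conf_maint2} lifts) --- one sentence of yours is missing here; and finiteness of $C_{\mathcal{E}}$, needed for it to be a configuration at all, follows from the principle of finite causes and deserves a mention. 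Finally, since $\condRelEv$ is reflexive, $e\in\precondset{e}$, so the literal claim $C_{\mathcal{E}}\psiEntailment\varphi_e$ suffers the same off-by-reflexivity glitch as the paper's own proofs; you inherit it rather than create it, but your observation ``$\precondset{e}\subseteq C_{\mathcal{E}}\cup\{e\}$'' cannot silently become $\precondset{e}\subseteq C_{\mathcal{E}}$ without reading $\precondset{e}$ as the strict causal past.
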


\vspace{-1ex}
\begin{proof}
In a prime event structure if two events $e,e'$ are concurrent then there exists a configuration $C$ reachable from the root which contains the conditions of both events, i.e., $\precondset{e}\subseteq C$ and $\precondset{e'}\subseteq C$, and does not contain any of the two events, i.e., $e,e'\not\in C$ (cf.\ Remark~\ref{remark_ES}).
Take this configuration as the one $C_{\mathcal{E}}$ sought in the theorem.
Therefore we have the following steps in the event structure: $C_{\mathcal{E}}\transition{e}C_{\mathcal{E}}\cup e$, $C_{\mathcal{E}}\transition{e'}C_{\mathcal{E}}\cup e'$, $C_{\mathcal{E}}\cup e\transition{e'}C_{\mathcal{E}}\cup\{e,e'\}$, and $C_{\mathcal{E}}\cup e'\transition{e}C_{\mathcal{E}}\cup\{e,e'\}$.

Since $C_{\mathcal{E}}$ is reachable from the root then by Lemma~\ref{lemma_conf_maint2} all the steps are preserved in the behaviour of the \eventpsi-process $\evtopsi(\mathcal{E},\emptyset)$, meaning that $\evtopsi(\mathcal{E},C_{\mathcal{E}})$ is reachable from  (i.e., part of the behaviour of) $\evtopsi(\mathcal{E},\emptyset)$.


Since $e,e'\not\in C_{\mathcal{E}}$ we have that $\evtopsi(\mathcal{E},C_{\mathcal{E}})$ is in the form 
$P_0 = P_e|P_{e'}|Q$ with $P_{e}$ and $P_{e'}$ processes of kind \textbf{case}. From Lemma~\ref{conf_maint} we know that the frame of $\evtopsi(\mathcal{E},C_{\mathcal{E}})$ is the assertion corresponding to $C_{\mathcal{E}}$, which is $\psiframe{P_e|P_{e'}|Q} = \{\emptyset\}\cup\{\emptyset\}\cup\Psi_Q = \Psi_Q$.


From Lemma~\ref{lemma_conf_maint2} we see the transitions between the \eventpsi-processes:
$\emptyset \semanticsPsi \evtopsi(\mathcal{E},C_{\mathcal{E}})\transition{e} P_{1}\transition{e'}P_{2}$ with $P_{2}=\psiAssertOp{e}\psiPar\psiAssertOp{e'}\psiPar Q$ as well as $\emptyset \semanticsPsi\evtopsi(\mathcal{E},C_{\mathcal{E}})\transition{e'}P_{3}\transition{e}P_{4}$ with $P_{4}=\psiAssertOp{e}\psiPar\psiAssertOp{e'}\psiPar Q$.
We thus have the expected interleaving diamond. 

As a side, remark that $\psiframe{P_1} = \psiframe{P_0}\composition \psiAssertOp{e}$ and $\psiframe{P_3} = \psiframe{P_0}\composition\psiAssertOp{e'}$ thus $\psiframe{P_1}\composition\psiframe{P_3} = \psiframe{P_0}\composition\psiAssertOp{e}\composition\psiAssertOp{e'}=\psiframe{P_4}$, which say that $e\in \psiframe{P_1} \wedge e'\notin \psiframe{P_1}\wedge e'\in \psiframe{P_3}\wedge e\notin \psiframe{P_3}\wedge\psiframe{P_1}\composition\psiframe{P_3} = \psiframe{P_4}$. Using Lemma~\ref{conf_maint} these can be correlated with configurations and thus we can see the definition of concurrency from configurations as in Remark~\ref{remark_ES}.3.
\end{proof}

The proof of Theorem~\ref{th_preserv_conc} hints at an opposite result, stating a true concurrency rule for \eventpsi-processes. Intuitively the next result says that any two events that in the behaviour of the \eventpsi-process make up the interleaving diamond are concurrent in the corresponding event structure.

\begin{theorem}[interleaving diamonds]\label{cor_independenceDiamonds}
For any event structure $\mathcal{E}$, in the corresponding \eventpsi-pro\-cess $\evtopsi(\mathcal{E},\emptyset)$, for any  interleaving diamond
$\emptyset \semanticsPsi
\evtopsi(\mathcal{E},C_{\mathcal{E}})\transition{e} P_{1}\transition{e'}P_{2}$ and $\emptyset \semanticsPsi\evtopsi(\mathcal{E},C_{\mathcal{E}})\transition{e'}P_{3}\transition{e}P_{4}$ with $P_{2}=P_{4}$, 
for some configuration $C_{\mathcal{E}}\in\mathcal{C}_{\mathcal{E}}$,
 we have that the events $e\concurrent e'$ are concurrent in $\mathcal{E}$.
\end{theorem}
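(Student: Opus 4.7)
The plan is to turn the existence of the interleaving diamond in $\evtopsi(\mathcal{E},\emptyset)$ into the configuration-based witness of concurrency given by Remark~\ref{remark_ES}.3, using Lemma~\ref{lemma_conf_maint2} as the translation device between psi-transitions and event-structure steps.

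First I would apply the second direction of Lemma~\ref{lemma_conf_maint2} four times, once to each edge of the diamond. Since $\evtopsi(\mathcal{E},C_\mathcal{E})$ is assumed reachable from the root $\evtopsi(\mathcal{E},\emptyset)$, this yields configurations $C_1,C_2,C_3,C_4\in\mathcal{C}_{\mathcal{E}}$ with $P_i=\evtopsi(\mathcal{E},C_i)$ and event-structure steps $C_\mathcal{E}\transition{e} C_1\transition{e'} C_2$ and $C_\mathcal{E}\transition{e'} C_3\transition{e} C_4$. Unfolding the definition of the step relation on configurations then forces $e,e'\notin C_\mathcal{E}$, $C_1=C_\mathcal{E}\cup\{e\}$, $C_3=C_\mathcal{E}\cup\{e'\}$, and $C_2=C_4=C_\mathcal{E}\cup\{e,e'\}$. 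Incidentally this also rules out $e=e'$: otherwise $C_1\transition{e'}C_2$ would require $e'=e\notin C_1$, contradicting $e\in C_1$.

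Next I would check that the hypothesis $P_2=P_4$ is consistent with $C_2=C_4$ (it is, automatically, from the previous step), and that $C_1$, $C_3$ and $C_2$ really are configurations of $\mathcal{E}$ — this is given to us by Lemma~\ref{lemma_conf_maint2}, but it matters for the next move, because the configurations obtained must be genuine configurations, not just sets of events.

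Finally I would invoke Remark~\ref{remark_ES}.3 with witnesses $C_1$ and $C_3$: we have $e\in C_1$, $e'\notin C_1$, $e'\in C_3$, $e\notin C_3$, and $C_1\cup C_3=C_\mathcal{E}\cup\{e,e'\}=C_2\in\mathcal{C}_{\mathcal{E}}$, which is exactly the characterisation $e\concurrent e'$. There is no real obstacle here; the only delicate point is that the two branches of the diamond give back \emph{the same} configuration $C_\mathcal{E}\cup\{e,e'\}$, so that $C_1\cup C_3$ is immediately known to be a configuration, rather than something we would otherwise have to establish from scratch via conflict-freeness and downward-closure.
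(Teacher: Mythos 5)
Your proposal is correct, but it follows a genuinely different route from the paper. The paper's proof never passes back through configurations: it argues directly on the syntactic shape of $\evtopsi(\mathcal{E},C_{\mathcal{E}})=P_e\psiPar P_{e'}\psiPar Q$, uses Lemma~\ref{conf_maint} to identify the frame $\Psi_Q$ with $C_{\mathcal{E}}$, and then extracts from the entailment demanded by the \textsc{(case)} rule that the first pair of transitions forces $e'\notin\pi_L(\varphi_e)$ and $e\notin\pi_L(\varphi_{e'})$ (no causal dependence), while the second pair of transitions, taken after $\{e\}$ resp.\ $\{e'\}$ has entered the frame, forces $e\notin\pi_R(\varphi_{e'})$ and $e'\notin\pi_R(\varphi_e)$ (no conflict); concurrency then follows from Definition~\ref{def_conc_ES}. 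You instead push the whole diamond back into the event structure via the second direction of Lemma~\ref{lemma_conf_maint2} and conclude with the configuration-based characterisation of Remark~\ref{remark_ES}.3, with $C_1=C_{\mathcal{E}}\cup\{e\}$ and $C_3=C_{\mathcal{E}}\cup\{e'\}$ as witnesses. Your argument is shorter and more modular, and it makes nice use of the fact that the two branches reconverge on the same configuration, so $C_1\cup C_3$ is a configuration for free. The one caveat is that it leans entirely on the converse half of Lemma~\ref{lemma_conf_maint2}, which the paper only asserts (its proof is explicitly deferred to ``the proofs of the next results'', i.e.\ to this very theorem and its companion); the paper's direct syntactic argument is precisely what avoids that dependency, whereas your proof is only as solid as an independent proof of that converse direction. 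If you supply (or cite) such a proof, your route is a perfectly good, arguably cleaner, alternative.
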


\vspace{-1ex}
\begin{proof}
%
Since $\evtopsi(\mathcal{E},C_{\mathcal{E}})$ has two outgoing transitions labelled with the events $e$ and $e'$ it means that $\evtopsi(\mathcal{E},C_{\mathcal{E}})$ is in the form 
$P_0 = P_e|P_{e'}|Q$ with $P_{e}$ and $P_{e'}$ processes of kind \textbf{case}. From Lemma~\ref{conf_maint} we know that the frame of $\evtopsi(\mathcal{E},C_{\mathcal{E}})$ is the assertion corresponding to $C_{\mathcal{E}}$, which is $\psiframe{P_e|P_{e'}|Q} = \{\emptyset\}\cup\{\emptyset\}\cup\Psi_Q = \Psi_Q$.

We thus have that $e,e'\notin \Psi_Q$ and $P_0\transition{e}P_1$ and $P_0\transition{e'}P_3$. This means that for these two transitions to be possible it must be that the precondition for $e$ and $e'$ respectably must be met. Since $e,e'\notin\Psi_Q$ it must be that $e'\notin \pi_L(\varphi_e)$ and $e\notin\pi_L(\varphi_{e'})$. Since $\pi_L(\varphi_e)$ is the same as the set \precondset{e} and $\pi_L(\varphi_e')$ the set \precondset{e'} we have the two parts of the Definition~\ref{def_conc_ES}  that concern \condRelEv\ for the casual independence (concurrency) of the events $e,e'$, i.e., $\neg(e'\condRelEv e\vee e'\condRelEv e)$. 
After the two transitions are taken we have that $P_1 = \psiAssertOp{e}|P_{e'}|Q$ and $P_3 = P_e|\psiAssertOp{e'}|Q$. We thus have that $e\in\psiframe{P_1}$ and $e'\in\psiframe{P_3}$. For the transition $P_1\transition{e'}P_2$ to happen we must have that $e\notin\pi_R(\varphi_{e'})$ and for $P_3\transition{e}P_4$ we must have $e'\notin\pi_R(\varphi_e)$. This is the same as $e'\notin\preconfset{e}$ and $e\notin\preconfset{e'}$ which makes the last part of Definition~\ref{def_conc_ES} concerning the conflict relation, i.e., $\neg(e'\conflictRelEv e)$.
This completes the proof, showing $e\concurrent e'$.
\end{proof}

We have seen that the \eventpsi-processes that we obtain from event structures in Definition~\ref{def_evtopsi} have a specific syntactic form. But the \eventpsi\ instance allows any process term to be constructed over the three nominal data-types that we gave in Definition~\ref{def_event_psi}. The question is which of all these \eventpsi-processes correspond exactly to event structures? We want to have syntactic restrictions on how to write \eventpsi-process terms so that we are sure that there exists an event structure corresponding to each such restricted process term.

\begin{theorem}[syntactic restrictions]\label{th_syntactic_restrictions}
Consider \eventpsi-process terms built only with the following grammar:
\vspace{-2ex}\[
P_{ES} :=  \psiAssertOp{e} \mid \psiCase{\varphi:\psiOutput{e}{e}.\psiAssertOp{e}} \mid P_{ES}\psiPar P_{ES}
\]
Moreover, a term $P_{ES}$ has to respect the following constraints, for any $\varphi_{e},\varphi_{e'}$ from $\psiCase{\varphi_{e}:\psiOutput{e}{e}.\psiAssertOp{e}}$ respectively $\psiCase{\varphi_{e'}:\psiOutput{e'}{e'}.\psiAssertOp{e'}}$:
\begin{enumerate}
\item conflict: $e\not\in\pi_R(\varphi_{e})$ and $e'\in\pi_{R}(\varphi_{e})$ iff $e\in\pi_{R}(\varphi_{e'})$;

\item causality: $e\not\in\pi_L(\varphi_{e})$ and if $e\in\pi_{L}(\varphi_{e'})$ then $e'\not\in\pi_{L}(\varphi_{e})\wedge \pi_L(\varphi_{e})\subset \pi_L(\varphi_{e'})$;

\item executed events: $P_{ES}$ cannot have both $\psiAssertOp{e}$ and $\psiCase{\varphi:\psiOutput{e}{e}.\psiAssertOp{e}}$ for any $e$, nor multiples of each.
\end{enumerate}

\noindent For any such restricted process $P_{ES}$ there exists an event structure $\mathcal{E}$ and configuration $C_{\mathcal{E}}\in\mathcal{C}_{\mathcal{E}}$ s.t.
\vspace{-1ex}\[
\evtopsi(\mathcal{E},C_{\mathcal{E}})=P_{ES}.
\]
\end{theorem}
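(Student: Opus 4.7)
The strategy is to read an event structure $\mathcal{E}=(E,\condRelEv,\conflictRelEv,l)$ and a configuration $C_{\mathcal{E}}$ directly out of the syntactic form of $P_{ES}$, verify the event structure axioms and the configuration conditions, and then check that applying \evtopsi\ to this data reproduces $P_{ES}$ componentwise. Concretely, take $E$ to be the set of all event names occurring in $P_{ES}$; by constraint (3) each such $e$ appears in exactly one parallel component, either as $\psiAssertOp{e}$ or as $\psiCase{\varphi_e:\psiOutput{e}{e}.\psiAssertOp{e}}$. Let $C_{\mathcal{E}}=\{e\in E \mid \psiAssertOp{e}$ is a parallel component of $P_{ES}\}$, and set $\varphi_e=(\emptyset,\emptyset)$ for $e\in C_{\mathcal{E}}$. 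Finally, define $e'\condRelEv e$ iff $e'=e$ or $e'\in\pi_L(\varphi_e)$, and $e\conflictRelEv e'$ iff $e'\in\pi_R(\varphi_e)$ (equivalently, by constraint (1), iff $e\in\pi_R(\varphi_{e'})$). Take $l$ to be the identity labelling on $E$.

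Next I would check the event structure axioms. Reflexivity of $\condRelEv$ is built-in, and irreflexivity of its strict part is the first clause of constraint (2). Antisymmetry follows by contradiction: if $e'<e$ and $e<e'$ with $e\neq e'$, constraint (2) would give $\pi_L(\varphi_e)\subsetneq\pi_L(\varphi_{e'})\subsetneq\pi_L(\varphi_e)$. Transitivity uses the same strict inclusion to lift $e''\in\pi_L(\varphi_{e'})\subseteq\pi_L(\varphi_e)$ whenever $e''<e'<e$. Irreflexivity and symmetry of $\conflictRelEv$ are directly constraint (1). Finite causes follow because the grammar only produces finitely-supported $\pi_L(\varphi_e)$. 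For the configuration: $C_{\mathcal{E}}$ is conflict-free because every $e\in C_{\mathcal{E}}$ has $\varphi_e=(\emptyset,\emptyset)$, so no two executed events conflict; and $C_{\mathcal{E}}$ is downward-closed because, again using $\varphi_e=(\emptyset,\emptyset)$ for $e\in C_{\mathcal{E}}$, the only $e'$ with $e'\condRelEv e$ is $e$ itself. The equality $\evtopsi(\mathcal{E},C_{\mathcal{E}})=P_{ES}$ is then immediate by inspection: $\evtopsi$ places, for each $e\in E$, an assertion $\psiAssertOp{e}$ when $e\in C_{\mathcal{E}}$ and otherwise $\psiCase{(\precondset{e},\preconfset{e}):\psiOutput{e}{e}.\psiAssertOp{e}}$, and by construction $(\precondset{e},\preconfset{e})=\varphi_e$, matching the component of $P_{ES}$ at $e$ on the nose.

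The delicate step I expect to be the main obstacle is verifying conflict heredity: given $d\condRelEv e$ and $d\conflictRelEv f$, one must derive $e\conflictRelEv f$. With the definitions above, this translates to the implication that $d\in\pi_L(\varphi_e)$ and $d\in\pi_R(\varphi_f)$ imply $e\in\pi_R(\varphi_f)$, which does not obviously follow from constraints (1)--(3) as stated. I would attempt to derive it by combining the strict-inclusion clause of constraint (2), which forces $\pi_L(\varphi_e)$ to be downward-closed along $\condRelEv$, with constraint (1)'s ``iff'' to obtain the required upward-closure of the conflict sets; if that fails, the natural remedy is to strengthen the grammar's constraints with an explicit heredity clause ($d\in\pi_L(\varphi_e)\wedge d\in\pi_R(\varphi_f)\Rightarrow e\in\pi_R(\varphi_f)$), which is exactly the syntactic counterpart of the axiom and which is silently satisfied by every term in the image of \evtopsi. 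With this additional clause in place, all remaining checks become routine and the theorem follows.
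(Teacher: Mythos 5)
Your construction follows essentially the same route as the paper's own proof: split $P_{ES}$ into the asserted (executed) events and the \textbf{case}-guarded ones, take $C_{\mathcal{E}}$ to be the asserted names, read the causality and conflict relations off the $\pi_L$ and $\pi_R$ components of the guards, verify the order axioms from constraint (2) and irreflexivity/symmetry of conflict from constraint (1), and then observe that \evtopsi\ reproduces $P_{ES}$ componentwise. You are in fact somewhat more careful than the paper: the paper never checks that the chosen $C_{\mathcal{E}}$ is a configuration of the constructed structure, whereas your choice $\varphi_e=(\emptyset,\emptyset)$ for executed events makes conflict-freeness and downward closure immediate. (One small point there: with that choice, symmetry of conflict across an executed/non-executed pair must be imposed by taking the symmetric closure, since constraint (1) only speaks about pairs of \textbf{case} processes; this closure does not disturb the equality $\evtopsi(\mathcal{E},C_{\mathcal{E}})=P_{ES}$ because executed events carry no guard.)

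The obstacle you single out is genuine, and it is not resolved in the paper either: the paper's proof verifies that the constructed conflict relation is irreflexive and symmetric, but never addresses the principle of conflict heredity required by Definition~\ref{def_eventStructures}, and constraints (1)--(3) do not imply it. Concretely, take three \textbf{case} processes over events $d,e,f$ with $\varphi_d=(\emptyset,\{f\})$, $\varphi_f=(\emptyset,\{d\})$, $\varphi_e=(\{d\},\emptyset)$, and $C_{\mathcal{E}}=\emptyset$; all three constraints hold, yet any prime event structure with $\evtopsi(\mathcal{E},C_{\mathcal{E}})=P_{ES}$ would need $d\condRelEv e$, $d\conflictRelEv f$ and at the same time $\preconfset{e}=\pi_R(\varphi_e)=\emptyset$, contradicting heredity. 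So your attempted derivation of the upward-closure of the conflict sets from (1)--(2) cannot succeed, and your fallback --- adding the explicit syntactic clause $d\in\pi_L(\varphi_e)\wedge d\in\pi_R(\varphi_f)\Rightarrow e\in\pi_R(\varphi_f)$, which every process in the image of \evtopsi\ satisfies --- is exactly what is needed for the stated conclusion to hold; with it, your remaining checks are routine and coincide with the paper's argument.
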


\vspace{-1ex}
\begin{proof}
From a \eventpsi-process $P_{ES}$ defined as in the statement of the theorem, we show how to construct an event structure $\mathcal{E}=(E,\condRelEv,\conflictRelEv)$ and a configuration $C_{\mathcal{E}}$. 
We have that $P_{ES}$ is built up of assertion processes and case guarded outputs, i.e., $P_{ES}=(\psiPar_{e\in E_{c}}\psiAssertOp{e}) \ \ \psiPar \  (\psiPar_{f\in E_{r}}\psiCase{\varphi_{f}:\psiOutput{f}{f}.\psiAssertOp{f}})$. 

Because of the third restriction on $P_{ES}$ we know that $E_{c}$ and $E_{r}$ are sets, as no multiples of the same process can exist. Moreover, these two sets are disjoint.
For otherwise, assume we have $\psiAssertOp{e} | \psiCase{\varphi_{e}:\psiOutput{e}{e}.\psiAssertOp{e}}$ part of $P_{ES}$. This is the same as if $e$ has happened already and $e$ may happen in future, which cannot be the case for event structures. 
%

We take $C_{\mathcal{E}}$ to be the frame of $\psiframe{P_{ES}}=E_{c}$.
We take the set of events to be $E = E_c\cup E_r$.
We construct the causality and conflict relations from the processes in the second part of $P_{ES}$ as follows:
$\leq = \cup_{e\in E_r} \{(e',e) | e'\in \pi_L(\varphi_e)\}$ and 
$\sharp = \cup_{e\in E_r} \{(e',e)|e' \in \pi_R(\varphi_e)\}$. 
We prove that the causality relation is a partial order. 
For irreflexivity just use the first part of the second restriction on $P_{ES}$. For antisymmetry assume that $e\leq e' \wedge e'\leq e \wedge e\neq e'$ which is the same as having $e\in\pi_L(\varphi_{e'}) \wedge e'\in\pi_L(\varphi_{e})$. This contradicts the second restriction on $P_{ES}$. 
Transitivity is easy to obtain from the second restriction which says that when $e\leq e'$ then all the conditions of $e$ are a subset of the conditions of $e'$.
We prove that the conflict relation is irreflexive and symmetric. The irreflexivity follows from the first part of the first restriction on $P_{ES}$, whereas the symmetry is given by the second part.

It is easy to see that for the constructed event structure and the configuration chosen above, we have $\evtopsi(\mathcal{E},C_{\mathcal{E}})=P_{ES}$. The encoding function \evtopsi\ takes all events from $C_{\mathcal{E}}$ to the left part of the $P_{ES}$, whereas the remaining events, i.e., from $E_{r}$ are taken to \psiCase{}\!processes where for each event $f\in E_{r}$ the corresponding condition $\varphi_{f}$ contains the causing events respectively the conflicting events. But these correspond to how we built the two relations above.
\end{proof}

\subsection{Refinement}\label{subsec_refinement}

We want to be able to refine psi processes on the same line as labelled event structures are refined in \cite{GlabbeekG01refinement}. We recall below the definition of refinement of event structures from \cite{GlabbeekG01refinement}.

A \textit{refinement function} \refinement, is a function from actions to event structures without conflict (i.e., the conflict relation is empty). This is considered as a given function to be used in the \textit{refinement operation}. This refinement operation can be also seen as a function from event structures together with functions as above, and returning new event structures, i.e., like an algorithm.
For notation economy this algorithm is also denoted by $ref$, to connect it with the essential input it takes as the refinement function $\refinement : Act \rightarrow \textbf{E}_{\not\,\,\conflictRelEv}$ (with $\textbf{E}_{\not\,\,\conflictRelEv}$ denoting conflict-free prime event structures).

\begin{definition}[refinement for prime event structures]\label{def_ref_ES} For an event structure $\mathcal{E}$ with events labelled by $l: E \rightarrow Act$ with actions from $Act$ we have the following definitions.

(i) A Function $\refinement : Act \rightarrow \textbf{E}_{\not\,\,\conflictRelEv}$ is called a \emph{refinement function} (for prime event structures) iff $\forall a\in Act : \refinement(a)$ is a non-empty, finite and conflict-free labelled prime event structure.

(ii) Let $\mathcal{E} \in \textbf{E}_{}$ and let $\refinement$ be a refinement function.

Then $\refinement(\mathcal{E})$ is the prime event structure defined by:
\begin{itemize}
\item $E_{\refinement(\mathcal{E})}:= \{(e,e')|e\in E_\mathcal{E},e' \in E_{\refinement(l_\mathcal{E}(e))}\},$ where $E_{\refinement(l_\mathcal{E}(e))}$ denotes the set of events of the event structure $\refinement(l_\mathcal{E}(e))$,

\item $(d,d')\condRelEv_{\refinement(\mathcal{E})}(e,e')$ iff $d \condRelEv_\mathcal{E} e$ or $ (d= e\wedge d' \condRelEv_{\refinement(l_\mathcal{E}(d))}e')$,

\item $(d,d')\sharp_{\refinement(\mathcal{E})} (e,e')$ iff $ d\sharp_\mathcal{E} e$,

\item $l_{\refinement(\mathcal{E})}(e,e'):=l_{\refinement(l_\mathcal{E}(e))}(e')$. 
\end{itemize}
\end{definition}

The intuition of refinement is to take one action (which is thought as an abstraction) and give it more structure. Since the same action can be instantiated several times at different points in the system, i.e., by different events, all these events labelled by the same action are given more structure by replacing them with a new event structure. For example one event can become a sequence of events, or the parallel composition of deterministic components.
But refinement is restricted to not contain conflicts, i.e., not contain choices. This is because of technical reasons that make it not possible to define the new conflict relation so to obtain prime event structures after refinement. But there are also natural counter-examples for requiring conflict-free refining event structures, and van Glabbeek and Goltz in \cite{GlabbeekG01refinement} explain these much better than we ever could.
We need a similar refinement operation for \eventpsi-process terms.

\begin{definition}
Given a refinement function for event structures \refinement, we define an operation $\refinement^\Psi$ that refines an \eventpsi-process to a new one over the names
\[
T^\Psi = \{(e,e') \mid e\in E, e'\in E_{ref(l(e))}\}. 
\]
An \eventpsi-process $P$, build according to Theorem~\ref{th_syntactic_restrictions}, with frame $\psiframe{P}=\Psi_P$, is refined into a process 
\[
\hspace{9ex}\refinement^{\psi}\!(P)=|_{(e,e')\in T^P}P_{(e,e')}, \text{\hspace{8ex} with } T^P = \{(e,e')|e\in \mathit{en}(P), e'\in  E_{ref(l(e))}\}
\]
and 
$P_{(e,e')} = \psiAssertOp{\{(e,e')\}} $, if $e \in \Psi_P $, otherwise
$P_{(e,e')} = \psiCase{\varphi_{(e,e')}:\overline{(e,e')}(e,e').\psiAssertOp{\{(e,e')\}}}$, with the conditions being 
\[
\varphi_{(e,e')} = (\precondset{(e,e')}, \preconfset{(e,e')}),
\]
where\ \ $\precondset{(e,e')} = \{(d,d') \mid d\in\pi_L(\varphi_e) \vee (d=e \wedge d'\in\condRelEv_{ref(l(d))} e)\}$\ \ and\ \ $\preconfset{(e,e')} = \{(d,d')| d \in \pi_R(\varphi_e)\}$.
%
\end{definition}

%
%

The new names are pairs of a parent event name (i.e., from the original process) and one of the event names from the refinement processes. 
We do not end up outside the \eventpsi\ instance because we can rename any pair by names from $E$. Take any total order $<$ on $E$ and define from it a total order $(e,e') < (d,d') \mbox{ iff } e<d\vee (e=d\wedge e'<d')$ on the pairs; rename any pair by an event from $E$ while preserving the order, thus making $T^{\psi}$ the same as the \psiTerms\ of \eventpsi.
\cp{In fact it is NOT isomorphic! So maybe we need to change the definition of event-psi instance. Think about defining the names to be the natural numbers, or any set that can be identified to the natural numbers, or with a finite subset of them. Then I could say that the pairs of names can still be identified with the natural numbers.}

We make new conditions for each of the new names $(e,e')$, where \precondset{(e,e')} contains all pairs of names s.t.\ either the left part is a condition for $e$, or the left part is the same as $e$ but the right part is a condition for $e'$. The conflicts set \preconfset{(e,e')} contains all pairs of names with the first part a conflict for $e$. The refinement generates for each new pair one process which is either an assertion or a \psiCase{}process, depending on whether the first part of the event pair was in the frame of the old $P$ or not.




\begin{theorem}[refinement in \eventpsi\ corresponds to refinement in ES]\label{refineOfPsi}
For any prime event struc\-ture $\mathcal{E}$ we have that: \hspace{3ex} $\evtopsi(\mathit{ref}(\mathcal{E}),\emptyset) = \mathit{ref}^{\psi}(\evtopsi(\mathcal{E},\emptyset))$.
\end{theorem}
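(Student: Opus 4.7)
The plan is to prove this by direct unfolding of both sides according to their definitions and then matching them component by component. Since the encoding starts from the empty configuration $\emptyset$ on both sides, no assertion processes $\psiAssertOp{\cdot}$ will appear in the result; every parallel component will be a $\psiCase{}$-process of the shape $\psiCase{\varphi:\psiOutput{x}{x}.\psiAssertOp{x}}$. So the proof reduces to showing that (a) both sides are indexed over the same set of names and (b) for each name, the conditions $\varphi$ coincide.

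First I would unfold the left-hand side. By Definition~\ref{def_ref_ES}, the refined event structure $\refinement(\mathcal{E})$ has event set $E_{\refinement(\mathcal{E})}=\{(e,e')\mid e\in E_{\mathcal{E}},e'\in E_{\refinement(l_{\mathcal{E}}(e))}\}$, with causality $(d,d')\condRelEv_{\refinement(\mathcal{E})}(e,e')$ iff $d\condRelEv_{\mathcal{E}}e$ or $(d=e\wedge d'\condRelEv_{\refinement(l_{\mathcal{E}}(d))}e')$, and with conflict $(d,d')\sharp_{\refinement(\mathcal{E})}(e,e')$ iff $d\sharp_{\mathcal{E}}e$. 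Then applying $\evtopsi(\cdot,\emptyset)$ produces, for each $(e,e')\in E_{\refinement(\mathcal{E})}$, a component $\psiCase{\varphi_{(e,e')}:\psiOutput{(e,e')}{(e,e')}.\psiAssertOp{(e,e')}}$ whose condition has left projection $\precondset{(e,e')}$ and right projection $\preconfset{(e,e')}$ read off the above causality and conflict relations of $\refinement(\mathcal{E})$.

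Next I would unfold the right-hand side. Let $P=\evtopsi(\mathcal{E},\emptyset)$; by Definition~\ref{def_evtopsi}, $P=\psiPar_{e\in E}\psiCase{\varphi_{e}:\psiOutput{e}{e}.\psiAssertOp{e}}$ with $\pi_{L}(\varphi_{e})=\precondset{e}$ and $\pi_{R}(\varphi_{e})=\preconfset{e}$, and $\mathit{en}(P)=E_{\mathcal{E}}$, so $\Psi_{P}=\emptyset$. Applying $\refinement^{\psi}$ yields parallel components indexed by $T^{P}=\{(e,e')\mid e\in E_{\mathcal{E}},e'\in E_{\refinement(l(e))}\}$, which coincides with $E_{\refinement(\mathcal{E})}$. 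Since no $e$ lies in $\Psi_{P}=\emptyset$, each component is $\psiCase{\varphi_{(e,e')}:\overline{(e,e')}(e,e').\psiAssertOp{(e,e')}}$, and by the definition of $\refinement^{\psi}$ the condition has left projection $\{(d,d')\mid d\in\pi_{L}(\varphi_{e})\vee(d=e\wedge d'\condRelEv_{\refinement(l(d))}e')\}$ and right projection $\{(d,d')\mid d\in\pi_{R}(\varphi_{e})\}$.

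The final step is to observe that substituting $\pi_{L}(\varphi_{e})=\precondset{e}=\{d\mid d\condRelEv_{\mathcal{E}}e\}$ and $\pi_{R}(\varphi_{e})=\preconfset{e}=\{d\mid d\sharp_{\mathcal{E}}e\}$ into the $\refinement^{\psi}$-formulas yields exactly the precondition and conflict sets computed on the left-hand side from Definition~\ref{def_ref_ES}. Hence the two indexed parallel compositions are syntactically identical, modulo the associativity and commutativity of $\psiPar$ already implicit in our indexed-parallel notation. I do not expect a hard obstacle here; the only delicate point is the bookkeeping of the precondition set, where the disjunction between the ``outer'' dependency $d\condRelEv_{\mathcal{E}}e$ and the ``inner'' dependency $d=e\wedge d'\condRelEv_{\refinement(l_{\mathcal{E}}(e))}e'$ must be aligned in the two formulations; this alignment is immediate once $\pi_{L}(\varphi_{e})$ is expanded. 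The case with a non-empty starting configuration is not needed here, but would follow the same pattern, with some components replaced by $\psiAssertOp{\cdot}$ exactly when $e\in\Psi_{P}$.
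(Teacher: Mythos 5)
Your proposal is correct and follows essentially the same route as the paper's own proof: identify the index sets ($T^{P}=E_{\refinement(\mathcal{E})}$), note that the empty starting configuration rules out assertion components, and match the conditions of the corresponding $\psiCase{}$\!processes on both sides. Your explicit unfolding of $\pi_{L}(\varphi_{e})$ and $\pi_{R}(\varphi_{e})$ to align the precondition and conflict sets with Definition~\ref{def_ref_ES} is just a more detailed rendering of the step the paper states in words.
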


\begin{proof}
As $\psiTerms = E$ and as $T^{\psi}$ is built from \psiTerms\ with the same rules as $E_{\refinement}$ is built from $E$ we have that $T^{\psi} = E_{\refinement}$.
Since the processes we work with are parallel compositions of assertion and \psiCase{}processes, it means we have to show that any assertion processes on the left is also found on the right of the equality (and vice versa), and the same for the \psiCase{}processes. Since we work with the empty initial configuration, then there are no assertion processes on neither sides.

The \psiCase{}\!processes on the left side are those generated by \evtopsi\ from the pairs events returned by the $\mathit{ref}$ from the event structure. This means that for each pair we have its condition built up as in the Definition~\ref{def_ref_ES}.
On the right side we have \psiCase{}processes for the original process before the refinement, with their respective conditions. But the $\mathit{ref}^{\Psi}$ replaces these with many \psiCase{}processes, one for each new pair, and for each the conditions are build exactly as the $\mathit{ref}$ is defining them. This says that we have the same number of \psiCase{}processes on both sides of the equality, and they have the same conditions.
%
%
\end{proof}

\section{DCR graphs as psi-calculi}\label{sec_DCR_into_psi}

We achieved a rather natural and intuitive translation of the prime event structures into an instance of psi-calculi. We made special use of the logic of psi-calculi, i.e., of the assertions and conditions and the entailment between these, as well as the assertion processes. Noteworthy is that we have not used the communication mechanism of psi-calculus, which is known to increase expressiveness.

We try to extend this approach from event structures to the DCRs. But it appears that we need the communication constructs on processes to keep track of the current marking of a DCR. The particularities and expressiveness of DCRs do not allow for a simple way of updating the marking, as was the case for event structures when just union with the newly executed event was enough. But once we use the communication, outputting a term representing the current marking, and incorporating an idea of generation (or age) of an assertion, where assertion composition keeps the newest generation which would be used for entailments, we get a nice natural encoding for DCRs in a psi-calculus instance.  We can then see associations with the previous encoding of the event structures. The markings are kept in the assertions, i.e., as the frame of the process; the same as we did with the configurations of the event structures. Case processes are used for each event of the DCR, and the conditions of the case processes capture the information needed to decide when events of a DCR are enabled in a marking. The entailment relation then captures the enabling of events.

\begin{definition}[\dcrpsi\ instance]\label{def_DCRpsi_istance}
We define an instantiation of Psi-calculi called \dcrpsi\ by providing the following definitions:
\vspace{-2ex}$$\psiTerms \eqbydef \{m\} \cup \psiAssertions \vspace{-1ex}$$
$$\psiAssertions \eqbydef \partsof{E}\times \partsof{E}\times \partsof{E}\times \mathbb{N}\vspace{-1ex}$$
where $E$ is a nominal set and $\mathbb{N}$ is the nominal data structure capturing natural numbers using a successor function $s(\cdot)$ and generator $0$, whereas $m$ is a single name used for communication;
%
%
$$\psiConditions \eqbydef \partsof{E}\times \partsof{E}\times E \hspace{6ex} 
\psiChanEq \eqbydef = \hspace{6ex} 
\psiAssertUnit \eqbydef (\emptyset,\emptyset,\emptyset,0)\vspace{-1ex}$$
%
%
%
$$\psiAssertOp{(Ex, Re, In, G)}\composition \psiAssertOp{(Ex', Re', In', G')}\eqbydef
\begin{cases} \psiAssertOp{(Ex, Re, In, G)} & \text{if $G>G'$} \\
        \psiAssertOp{(Ex', Re', In', G')} & \text{if $G<G'$} \\
        \psiAssertOp{(Ex\cup Ex', Re\cup Re', In\cup In', G)} & \text{if $G=G'$}\end{cases}$$
where the comparison $G<G'$ is done using subterm relation, eg., $s(N)>N$.
Entailment \psiEntailment\ is defined as: 
$$\psiAssertOp{(Ex, Re, In, G)} \vdash (Co, Mi, e) \mbox{\ \ \ iff\ \ \ } e\in In \wedge (In\cap Co)\subseteq Ex \wedge ((In\cap Mi)\cap Re)=\emptyset.$$
\end{definition}

Terms can be either a name $m$, which we will use for communications, or assertions which will be the data communicated.
Assertions are a tuple of three sets of events, and a number we intend to hold the \textit{generation} of the assertion. 
The first set is meant to capture what events have been executed, the second set for those events that are pending responses, and the third set for those events that are included. These three sets mimic the same sets that the marking of a DCR-graph contains.
The generation number is used to get the properties of the assertion composition, which are somewhat symmetric, but still have the composition return only the latest marking/assertion (i.e., somewhat asymmetric).

The composition of two assertions keeps the assertion with highest generation.\footnote{For technical reasons, when we compose two assertions with the same generation number we obtain an assertion where the sets are the union between the associated sets in each assertion, and the generation number is unchanged.} This makes the composition associative, commutative, compositional, and with identity defined to be the tuple with empty sets and lowest possible generation number.

The conditions are tuples of two sets of events and a single event as the third tuple component. The first set is intended to capture the set of events that are conditions for the single event. The second set is intended to capture the set of events that are milestones for the single event. 



The entailment definition mimics the definition in DCR graphs for when an event (i.e., the third component of the conditions) is enabled in a marking (i.e., the first three components of the assertions). Compare the example below with the definition of enabling from DCR graphs\vspace{-1ex}
$$\psiAssertOp{(Ex, Re, In, G)} \vdash (\conditionRel e, \milestoneRel e, e) \mbox{\ \ \ iff\ \ \ } e\in In \wedge (In\cap \conditionRel e)\subseteq Ex \wedge ((In\cap\milestoneRel e)\cap Re)=\emptyset.$$

\begin{definition}
We define the function \dcrtopsi\ which takes a DCR $(E, M \conditionRel, \responseRel, \milestoneRel, \includeRel, \excludeRel, L, l)$ with distinguished marking $M=(Ex', Re', In')$ and returns a \dcrpsi\ process\vspace{-1ex} 
$$P_{dcr} = P_s \psiPar P_E \vspace{-2ex}$$
where
\vspace{-2ex}
$$P_s = \psiAssertOp{(Ex',Re',In',0)} \psiPar \psiOutput{m}{(Ex',Re',In',0)}.\psiEmptyProc \hspace{4ex}\mbox{and}\hspace{4ex}
P_E = \psiPar_{e\in E} P_e$$ 
\vspace{-1ex}
with
\vspace{-1ex}
$$P_e = !(\psiCase{\varphi_e:\psiInput{m}{(X_E,X_R,X_I,X_G)}.$$
$$(\psiOutput{m}{(X_E\cup\{e\},(X_R\setminus\{e\})\cup e\responseRel,(X_I\setminus e\excludeRel)\cup e\includeRel, s(X_G))}.\psiEmptyProc \psiPar $$
$$\psiAssertOp{(X_E\cup\{e\},(X_R\setminus\{e\})\cup e\responseRel,(X_I\setminus e\excludeRel )\cup e\includeRel, s(X_G))})})$$
where $X_E,X_R,X_I,X_G$ are variables and \hspace{2ex} $\varphi_e = (\conditionRel e, \milestoneRel e, e)$.
\end{definition}

The process $P_{dcr}$ generated by \dcrtopsi\ contains a starting processes $P_{s}$ that models the initial marking of the encoded DCR as an assertion process, and also communicates this assertion on the channel $m$. The rest of the process, i.e., $P_{E}$ captures the actual DCR, being a parallel composition of processes $P_{e}$ for each of the events of the encoded DCR.
The events in a DCR can happen multiple times, hence the use of the replication operation as the outermost operator. Each event is encoded, following the ideas for event structures, using the \psiCase{}\!construct with a single guard $\varphi_e$. The guard contains the information for the event $e$ that need to be checked against the current marking (i.e., the assertion) to decide if the event is enabled; these information are the set of events that are prerequisites for $e$ (i.e., $\conditionRel e$) and the set of milestones related to $e$.
There may be several events enabled by a marking, hence several of the parallel \psiCase{}\!processes may have their guards entailed by the current assertion. Only one of these input actions will communicate with the single output action on $m$, and will receive in the four variables the current marking. After the communication, the input process will leave behind an assertion process containing an updated marking, and also a process ready to output on $m$ this updated marking. In fact, after a communication, what is left behind is something looking like a $P_{s}$ process, but with an updated marking. The updating of the marking follows the same definition from the DCRs.

\begin{lemma}\label{init-frame}
For any DCR graph $\mathcal{D}$, the frame of the corresponding process $\dcrtopsi(\mathcal{D})$ corresponds to the marking of the encoded DCR (i.e., the first three components). 
\end{lemma}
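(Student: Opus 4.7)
The plan is to compute the frame of $\dcrtopsi(\mathcal{D})$ directly from the inductive definition of frames given in Section~\ref{subsec_background_psi} and verify that its first three components coincide with the marking $M = (Ex', Re', In')$ of the encoded DCR. The approach is essentially a short unfolding along the top-level parallel structure $P_{dcr} = P_s \psiPar P_E$, so I would split the computation into these two parts.

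First I would compute $\psiframe{P_s}$. Since $P_s = \psiAssertOp{(Ex',Re',In',0)} \psiPar \psiOutput{m}{(Ex',Re',In',0)}.\psiEmptyProc$, the frame rules give $\psiframe{P_s} = (Ex',Re',In',0) \composition \psiAssertUnit$, because the frame of an output-prefixed process is $\psiAssertUnit$. By Definition~\ref{def_DCRpsi_istance}, $\psiAssertUnit = (\emptyset,\emptyset,\emptyset,0)$ and both operands carry the same generation $0$, so the third branch of the composition rule applies and yields $\psiframe{P_s} = (Ex', Re', In', 0)$.

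Next I would compute $\psiframe{P_E}$. Since $P_E = \psiPar_{e\in E} P_e$ and each $P_e$ is a replicated process $\psiRec(\dots)$, the frame rule for replication gives $\psiframe{P_e} = \psiAssertUnit$. Composing these across $E$ keeps the result equal to $\psiAssertUnit$. Putting both parts together, $\psiframe{P_{dcr}} = \psiframe{P_s} \composition \psiframe{P_E} = (Ex',Re',In',0) \composition (\emptyset,\emptyset,\emptyset,0) = (Ex',Re',In',0)$, whose first three components are exactly the marking $M$.

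The main obstacle, such as it is, is essentially notational: one must carefully verify that the frame of the inner replicated process $P_e$ is $\psiAssertUnit$ and not a larger assertion, since the body of the replication does contain a visible assertion process $\psiAssertOp{(X_E\cup\{e\},\dots)}$. However, that assertion sits under an input prefix and a \psiCase{} guard inside a replication, and the frame rules of Section~\ref{subsec_background_psi} explicitly flatten any assertion occurring under an action prefix, a case, or a replication to $\psiAssertUnit$. Hence the only nontrivial contribution to the frame comes from the exposed assertion in $P_s$, and the claim follows.
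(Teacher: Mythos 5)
Your proposal is correct and follows essentially the same route as the paper's (much terser) proof: the paper simply observes that $\dcrtopsi(\mathcal{D})$ exposes a single assertion, built directly from the marking with generation $0$, while you make this explicit by unfolding the frame rules over $P_s \psiPar P_E$ and checking that the assertions inside the replicated $P_e$ are hidden under prefixes and \psiCase{} guards. The extra care about the composition with $\psiAssertUnit$ at equal generation is a fine addition but not a departure in method.
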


\vspace{-1ex}
\begin{proof}
$\dcrtopsi(\mathcal{D})$ return a \dcrpsi\ process with only one assertion which thus is the frame. This assertion is made directly from the marking of $\mathcal{D}$ and added generation 0. 
\end{proof}
\vspace{-1ex}

\begin{lemma}\label{onecomm}
For any DCR graph $\mathcal{D}$, in the execution graph of the corresponding process $\dcrtopsi(\mathcal{D})$ at any execution point there will be only one output process.
\end{lemma}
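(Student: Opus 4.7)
The plan is to establish by induction on the length of $\tau$-transition sequences from $\dcrtopsi(\mathcal{D})$ the invariant that every reachable process contains exactly one unguarded output prefix on the channel $m$.

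For the base case, inspect $\dcrtopsi(\mathcal{D}) = P_s \psiPar P_E$. In $P_s$ the prefix $\psiOutput{m}{(Ex',Re',In',0)}.\psiEmptyProc$ is unguarded and is the only output there. In each $P_e$ of $P_E$, every output on $m$ lies beneath the input prefix $\psiInput{m}{(X_E,X_R,X_I,X_G)}$, which itself sits inside a $\psiCase{}$ and a replication; all such outputs are therefore guarded. The total count of unguarded outputs is exactly one.

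For the inductive step, observe that the channel $m$ is internal to the encoding, so the only transitions that matter are $\tau$-steps derived by the (com) rule on $m$. By the induction hypothesis there is exactly one unguarded output, which must play the sender role. The receiver is obtained by unfolding some $\psiRec\psiCase{\varphi_e:\psiInput{m}{(X_E,X_R,X_I,X_G)}.R_e}$ via the (rep) and (case) rules, provided the current frame entails $\varphi_e$. After the (com), the sender is consumed and the substituted continuation $R_e$ becomes unguarded; by the syntactic shape of $P_e$, this continuation is precisely $\psiOutput{m}{(\ldots,s(X_G))}.\psiEmptyProc \psiPar \psiAssertOp{(\ldots)}$, contributing exactly one new unguarded output. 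The residual replication $\psiRec\psiCase{\ldots}$ and the untouched $P_{e'}$ for $e' \neq e$ keep every remaining output guarded, so the invariant is preserved: the unique old output is replaced by a unique new one.

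The main obstacle is to make explicit that the (rep) rule cannot by itself expose outputs. Each unfolding still wraps the body in a $\psiCase{}$ whose single branch begins with an input, so outputs within $P_e$ remain guarded until an actual (com) fires. This is a purely syntactic check on the shape of the encoding, so once the invariant is identified the bookkeeping along the proof tree of each (com) transition is routine.
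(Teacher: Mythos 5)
Your proof is correct and follows essentially the same route as the paper: an induction on reachable processes with the invariant that exactly one (unguarded) output on $m$ exists, which each communication consumes and replaces one-for-one via the unfolded continuation of some $P_e$. Your version merely makes the guardedness bookkeeping (including that replication alone cannot expose an output) more explicit than the paper does.
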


\vspace{-1ex}
\begin{proof}
Initially we have only one output in the $P_s$ part of $\dcrtopsi(\mathcal{D})$. Inductively we assume a reachable process $P$ with only one output process. If we have any enabled input processes only one of these processes will join a communication with the single output process. All input processes are of the form $P_e$, 
which reduces with psi rules for replication and input to\vspace{-1ex}
$$P_e|(\psiOutput{m}{(X_E\cup\{e\},(X_R\setminus\{e\})\cup\responseRel e,(X_I\setminus\excludeRel e)\cup\includeRel e, s(X_G))}.\psiEmptyProc \psiPar $$
$$\psiAssertOp{(X_E\cup\{e\},(X_R\setminus\{e\})\cup\responseRel e,(X_I\setminus e\excludeRel )\cup \includeRel e, s(X_G))})\vspace{-1ex}$$
with  $X_E, X_R, X_I, X_G$ substituted with the terms that were sent. The output process reduces to \psiEmptyProc. We have added as many new output processes as we have removed, and as we initially only have one output process by induction we always will have only one. 
\end{proof}

\begin{lemma}\label{frame-message_equal}
For any DCR graph $\mathcal{D}$, in the corresponding process $\dcrtopsi(\mathcal{D})$
the message being sent will always be the same as the frame of the \dcrpsi\ process. 
\end{lemma}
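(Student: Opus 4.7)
The plan is to prove this by induction on the length of the transition sequence starting from $\dcrtopsi(\mathcal{D})$, simultaneously establishing the stronger invariant that the frame's generation number is the unique maximum among the generation components of all assertion subterms of the reachable process. For the base case, inspection of $P_s$ shows that it contains a single assertion with tuple $(Ex',Re',In',0)$ and a single output on $m$ carrying exactly this same tuple; by Lemma~\ref{init-frame} this is the entire frame, and by Lemma~\ref{onecomm} it is the only output, so both invariants hold immediately.

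For the inductive step, I would consider a reachable process $P$ satisfying both invariants and analyse its next communication (which is the only progressing transition available, as all other top-level constructs are replicated case-guarded inputs). By Lemma~\ref{onecomm}, the synchronisation consumes the unique output of $P$, which by inductive hypothesis carries the current frame $(Ex,Re,In,n)$. This output pairs with an input inside some $P_e$ whose case-guard is entailed by the frame, so the substitution binds $(X_E,X_R,X_I,X_G)$ to $(Ex,Re,In,n)$, and the continuation produces in parallel both a new output and a new assertion with identical content $(Ex\cup\{e\},\ (Re\setminus\{e\})\cup e\responseRel,\ (In\setminus e\excludeRel)\cup e\includeRel,\ s(n))$. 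What then remains is to show that this new assertion alone determines the new frame.

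The critical step is the composition clause of Definition~\ref{def_DCRpsi_istance}: composing an assertion of generation $s(n)$ with any assertion of strictly smaller generation yields the former. The strengthened invariant ensures that every other assertion subterm of $P$ has generation at most $n < s(n)$, so the new assertion is the unique maximum in the resulting process, and hence constitutes the new frame. Its content matches that of the new output, closing the inductive step. The strengthened invariant is itself preserved, since the only assertion added has the new, strictly larger generation $s(n)$.

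The main obstacle is exactly this generation-monotonicity invariant: the lemma's stated equality cannot be carried through alone, because older assertions accumulate in the process and the composition operator discriminates between them by generation. It is handled by the simultaneous induction above, which works because Lemma~\ref{onecomm} guarantees that exactly one communication happens at a time and each one increments the generation by a single successor, preventing divergence or ties across parallel branches past the initial step.
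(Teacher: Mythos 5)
Your proof is correct and follows essentially the same route as the paper's: induction along the transition sequence, using the base case from Lemma~\ref{init-frame}, the structure exposed in the proof of Lemma~\ref{onecomm} to see that each communication creates an identical assertion/output pair with generation incremented, and the generation-based composition clause to conclude that this new assertion alone determines the frame. The only difference is that you make explicit the generation-monotonicity invariant (the current frame's generation strictly dominates all other assertion subterms), which the paper leaves implicit when it says that older assertion processes ``are being ignored'' in the frame computation.
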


\vspace{-1ex}
\begin{proof}
Initially, the first message being sent by $P_{s}$ is by construction the same as the initial frame. 
%
The proof of Lemma~\ref{onecomm} shows that with each communication a new assertion is added and a new sender replaces the old one. The two new terms (i.e., the assertion process and the message) are identical and have the generation part increased by one. 
Since the composition of assertions keeps only the assertion with the higher generation, all older assertion processes that are still present are being ignored when computing the frame of the new process. We thus have our result.
\end{proof}

\vspace{-1ex}
\begin{lemma}[generations count transitions]\label{generation_count}
The generation part of the frame is the same as the number of transitions we have done from the initial process. 
\end{lemma}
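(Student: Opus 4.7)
The plan is to proceed by induction on the number of transitions $n$ taken from the initial process $\dcrtopsi(\mathcal{D})$, leveraging the three previous lemmas to keep the bookkeeping clean. The base case is immediate: by construction $P_s$ contains the assertion $\psiAssertOp{(Ex',Re',In',0)}$, and Lemma~\ref{init-frame} tells us this is exactly the frame, so the generation component is $0$ after $0$ transitions.

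For the inductive step, I would assume that after $n$ transitions the reachable process $P$ has frame generation equal to $n$. By Lemma~\ref{onecomm} there is exactly one output prefix available in $P$, and by Lemma~\ref{frame-message_equal} the term it is ready to send coincides with the frame of $P$, hence carries generation $n$. The next transition must therefore be a \textsc{(com)} step between this output and one of the enabled $P_e$ input branches (or some other internal reduction; but by inspection of the grammar of $\dcrtopsi(\mathcal{D})$ every transition is of this form, since the only prefixes are $m$-communications). When the communication fires, the variables $(X_E,X_R,X_I,X_G)$ are instantiated with the sent tuple, so $X_G$ is bound to the current generation $n$. The residue of the firing $P_e$ contributes both a new assertion process and a new output, each built with generation $s(X_G) = s(n)$, i.e.\ $n+1$.

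It remains to argue that the frame of the new process has generation $n+1$. All assertions inherited from $P$ have generation at most $n$, so when composed with the freshly produced assertion of generation $n+1$, the definition of $\composition$ retains only the higher-generation assertion. Hence the frame generation after $n+1$ transitions is $n+1$, completing the induction.

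The main obstacle I expect is making the induction hypothesis precise enough: one must be careful that the composition rule really collapses all older generation-$k$ assertions (for $k \le n$) in favour of the single freshest one, and that no transition other than the $m$-communication can occur (so that each step increments the generation by exactly one, never more and never zero). Both points follow from the shape of processes produced by $\dcrtopsi$ together with Lemmas~\ref{onecomm} and~\ref{frame-message_equal}, so the argument should reduce to a careful unfolding of one replication-and-communication step.
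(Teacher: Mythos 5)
Your proposal is correct and follows essentially the same route as the paper: induction on the number of transitions, with Lemma~\ref{init-frame} giving the base case, Lemma~\ref{frame-message_equal} ensuring the sent tuple carries the current generation, and the definition of assertion composition making the freshly produced generation-$s(n)$ assertion the new frame. Your write-up is if anything slightly more careful than the paper's (explicitly invoking Lemma~\ref{onecomm}, noting that every transition is an $m$-communication, and that all stale assertions have strictly lower generation), but the underlying argument is the same.
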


\vspace{-1ex}
\begin{proof}
We use induction and assume we have done $n$ transitions and the generation part of our frame is $n'$ where $n=n'$. 
From Lemma~\ref{frame-message_equal} we have that the frame and message are equal, so we will be sending $n$ as generation part of the message. After the communication a new assertion with generation $s(n')$ is added, which by the definition of assertion composition will be the new frame. By our assumption $s(n') = s(n) = n+1$. From Lemma~\ref{init-frame} we have that $n=n'=0$ for the initial process, and by induction we have that this holds for any number of transitions.
\end{proof}

\begin{theorem}[preserving transitions]\label{th_preserveDCRsteps}
In a DCR graph $\mathcal{D}$, for any transition $(\mathcal{D},M)\transition{e}(\mathcal{D},M')$ there exists a reduction between the corresponding \dcrpsi\ processes $\dcrtopsi(\mathcal{D},M)\transition{\tau}\dcrtopsi(\mathcal{D},M')$.
\end{theorem}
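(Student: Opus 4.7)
The plan is to exhibit a single psi-calculus derivation that synchronises the unique output process in $P_s$ with the input guarded by $\varphi_e$ inside the replicated $P_e$, and then to check that the residual matches $\dcrtopsi(\mathcal{D},M')$ up to the generation bookkeeping justified by Lemmas~\ref{init-frame}--\ref{generation_count}.

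First I would use the hypothesis $(\mathcal{D},M)\transition{e}(\mathcal{D},M')$ to conclude that $e$ is enabled in $M=(Ex,Re,In)$, i.e.\ $e\in In$, $(In\cap\conditionRel e)\subseteq Ex$, and $(In\cap\milestoneRel e)\cap Re=\emptyset$. By Lemma~\ref{init-frame} the frame of $\dcrtopsi(\mathcal{D},M)$ is $(Ex,Re,In,0)$, and since $\varphi_e=(\conditionRel e,\milestoneRel e,e)$, the definition of $\psiEntailment$ in \dcrpsi\ immediately gives $(Ex,Re,In,0)\psiEntailment\varphi_e$. Combining this with (rep) and (case) unrolls the replicated $P_e$ and activates the input $\psiInput{m}{(X_E,X_R,X_I,X_G)}.\dots$ under the current frame.

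Next, by Lemma~\ref{onecomm} there is exactly one active output process, namely $\psiOutput{m}{(Ex,Re,In,0)}.\psiEmptyProc$ in $P_s$, and by Lemma~\ref{frame-message_equal} its message coincides with the first three components of the frame. Applying (com) — with $m\psiChanEq m$ trivially entailed — together with (par) to carry the remaining replicated events, yields a $\tau$-transition. The matching substitution instantiates $(X_E,X_R,X_I,X_G)$ with $(Ex,Re,In,0)$, so the two residues of the input branch both carry the tuple
\[
\bigl(Ex\cup\{e\},\ (Re\setminus\{e\})\cup e\responseRel,\ (In\setminus e\excludeRel)\cup e\includeRel,\ s(0)\bigr),
\]
whose first three components are exactly $M'$ as prescribed by the DCR execution rule.

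Finally, one has to reconcile the syntactic form of the resulting process with $\dcrtopsi(\mathcal{D},M')$. The residual is $\psiAssertOp{(Ex,Re,In,0)}\psiPar\psiEmptyProc\psiPar P_E'$, where $P_E'$ is $P_E$ with $!P_e$ unfolded and a fresh sender/assertion pair attached carrying $(M',1)$; its frame, by the generation-aware $\composition$, is $(M',1)$ because the new assertion dominates the stale one. This is not literally $\dcrtopsi(\mathcal{D},M')$, which would carry generation $0$ and no inert residue, and bridging this gap is the main obstacle. The cleanest route is to read the equality in the statement up to the structural congruence that absorbs $\psiEmptyProc$ and the generation-dominated assertion into a single $P_s$-shaped process for $M'$; Lemma~\ref{generation_count} justifies ignoring the generation counter, since it merely tracks transition depth and never affects entailment. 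Under this identification the $\tau$-step hits exactly the target claimed in the theorem.
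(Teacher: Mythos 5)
Your proposal is correct and follows essentially the same route as the paper's proof: use Lemma~\ref{init-frame} to turn enabledness of $e$ in $M$ into entailment of $\varphi_e$ by the frame, use Lemmas~\ref{onecomm} and~\ref{frame-message_equal} to fire the unique output against the unfolded replicated input via \textsc{(com)}, and check that the resulting assertion/message carries exactly $M'$ with the generation incremented. The only difference is that you make explicit the syntactic mismatch between the residual process and $\dcrtopsi(\mathcal{D},M')$ (stale assertion, unfolded replication, generation counter) and resolve it by reading the target up to frame-preserving identifications, whereas the paper silently does the same by comparing only frames ``with the exception of the generation part''.
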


\begin{proof}
From Lemma \ref{init-frame} we know that the frame and marking are the same. This means that since $M\psiEntailment e$, the corresponding condition in the $\dcrtopsi(\mathcal{D},M)$ will be entailed by the frame. Therefore a communication is possible, i.e., a transition labelled by $\tau$. For $M = (Ex,Re,In)$ it means that the frame of $\dcrtopsi(\mathcal{D},M)$ is $(Ex, Re, In, G)$.
From Lemma \ref{frame-message_equal} we know that the frame is always the same as the message being sent. 
When the transition corresponding to the event $e$ happens the new frame of the \dcrpsi\ becomes
\[
\psiAssertOp{(Ex\cup\{e\},(Re\setminus\{e\})\cup\responseRel e,(In\setminus e\excludeRel )\cup e\includeRel, s(G))}
\] 
after alpha-conversion.
For a transition in DCR over the event $e$ we get the new marking 
\[M' = (Ex\cup\{e\}, (Re\setminus\{e\})\cup e\responseRel, (In\setminus e\excludeRel)\cup e\includeRel),
\]
which is the same as the new frame, with the exception of the generation part.
\end{proof}

Interesting would be to look closer at the encoding of event structures through the \evtopsi\ and the encoding through \dcrtopsi\ when seen as a special case of DCRs; a question on these lines would be: are $\evtopsi(ES)$ and $\dcrtopsi(DCR(ES))$ bisimilar?
First of all, \evtopsi\ translates into the \eventpsi\ instance, whereas \dcrtopsi\ into the \dcrpsi\ instance, and these two instances work with different terms and operator definitions. Even more, the encoding of event structures exhibits behaviour through \textit{labelled transitions}, whereas the behaviour of \dcrpsi\ encodings exposes \textit{only $\tau$-transitions}. Therefore, it is not easy to find a bisimulation-like correspondence.

Nevertheless, there are clear correlations. Consider an un-labelled event structure $(E,\condRelEv,\conflictRelEv)$ and its presentation as a DCR graph $(E,M,\condRelEv,\emptyset,\emptyset,\emptyset,\conflictRelEv\cup id)$ with the marking $M=(\emptyset,\emptyset,E)$; and denote the associated psi-processes by $P_{ES}=\evtopsi(ES)$ and $P_{DCR}=\dcrtopsi(DCR(ES))$. Correlate an assertion in $P_{DCR}$ with the assertion in $P_{ES}$ by looking only at the first set of the quadruple (having the second set of the quadruple, which encodes responses, always empty). The conditions of $P_{DCR}$ have the second set of milestones always empty; whereas the first set is the same as the first set of the conditions in $P_{ES}$.
One can now check that the entailment of a condition by an assertion in $P_{ES}$ is the same as the corresponding entailment in the $P_{DCR}$, when considering also the other behaviour aspects of these two processes and how they change the assertions.
But we do define this investigation to a longer version of this paper.

\section{Conclusions and outlook}\label{sec_conclusion}

We have encoded the true concurrency models of prime event structures and DCR graphs into corresponding instances of psi-calculi. For this we have made use of the expressive logic that psi-calculus provides to capture the causality and conflict relations of the prime event structures, as well as the relations of DCR-graphs. The computation in the concurrency models corresponds to reduction steps in the psi-processes. The more expressive model of DCR-graphs required us to make use of the communication mechanism of psi-calculi, whereas for event structures this was not needed. The data terms we sent were tuples of terms, capturing markings of DCR-graphs with a generation number attached to them.

For the encodings we also investigated some results meant to provide more confidence in their correctness. In particular, for event structures we also looked at action refinement as well as gave the syntactic restrictions that capture the psi-processes that exactly correspond to event structures.
Besides providing correlations between the computations in the respective models, we also investigated how true concurrency is correlated to the interleaving diamonds in the encodings we gave.

The purpose of our investigations was to see how well the expressiveness of psi-calculi can accommodate the expressiveness of true concurrency models.
Nevertheless, a discrepancy remains between the interleaving semantics based on SOS rules of psi-calculi, and the true concurrency nature of the two models we considered. Further investigations would look for a true concurrency semantics for psi-calculi (with initial results presented as \cite{NPH14MeMo}), and then see how our encodings fit with the true concurrency models that this semantics would return. One could also look into adding responses to psi-calculus, similar to how is done in \cite{DBLP:journals/corr/abs-1207-4270} for Transition Systems with Responses.

\bibliographystyle{eptcs}
\bibliography{bib}

\end{document}